\definecolor{darkgreen}{rgb}{0.0,0,0.9}
\newtheorem{theorem}{Theorem}[section]
\newtheorem{lemma}{Lemma}[section]
\newtheorem{observation}{Observation}[section]
\newtheorem{definition}{Definition}[section]
\newcommand{\bepg}{\textsc{B$_1$-EPG }}
\newcommand{\bvpg}{\textsc{B$_1$-VPG }}
\newcommand{\optds}{\textsc{OPT$_{DS}$}}
\newcommand{\opths}{\textsc{OPT$_{HS}$}}
\newcommand{\uchap}{$\ulcorner$}
\newcommand{\urast}{$\urcorner$}
\newcommand{\dchap}{$\llcorner$}
\newcommand{\drast}{$\lrcorner$}
\newcommand{\vtx}{\texttt{vtx}}
\newcommand{\edg}{\texttt{edg}}
\newcommand{\pivi}{$\mathcal{P}_{\vtx}$ }
\newcommand{\piei}{$\mathcal{P}_{\edg}$ }
\newcommand{\mds}{\textsc{Minimum Dominating Set }}
\newcommand{\mis}{\textsc{Maximum Independent Set }}
\newcommand{\apx}{\textsc{APX}-hard }
\newcommand{\mvc}{\textsc{Minimum Vertex Cover }}
\newcommand{\both}{\texttt{both}}
\newcommand{\one}{\texttt{one}}
\newcommand{\hpart}{\texttt{hPart}}
\newcommand{\vpart}{\texttt{vPart}}
\newcommand{\htip}{\texttt{hTip}}
\newcommand{\vtip}{\texttt{vTip}}
\newcommand{\hnei}{\texttt{hNeighbor}}
\newcommand{\vnei}{\texttt{vNeighbor}}
\newcommand{\cor}{\texttt{corner}}
\newcommand{\cro}{\texttt{cross}}
\title{Approximation Algorithms for Independence and Domination on \bvpg and \bepg Graphs}
\author{Saeed Mehrabi}
\affil{\small{University of Waterloo, Waterloo, Canada

				\url{smehrabi@uwaterloo.ca}}}
\date{}
\begin{document}

\maketitle

\begin{abstract}
A graph $G$ is called \emph{B$_k$-VPG} (resp., \emph{B$_k$-EPG}), for some constant $k\geq 0$, if it has a string representation on a grid such that each vertex is an orthogonal path with at most $k$ bends and two vertices are adjacent in $G$ if and only if the corresponding strings intersect (resp., the corresponding strings share at least one grid edge). If two adjacent strings of a B$_k$-VPG graph intersect exactly once, then the graph is called a \emph{one-string} B$_k$-VPG graph.

In this paper, we study the \mis and \mds problems on \bvpg and \bepg graphs. We first give a simple $O(\log n)$-approximation algorithm for the \mis problem on \bvpg graphs, improving the previous $O((\log n)^2)$-approximation algorithm of Lahiri et al.~\cite{LahiriMS15}. Then, we consider the \mds problem. We give an $O(1)$-approximation algorithm for this problem on one-string \bvpg graphs, providing the first constant-factor approximation algorithm for this problem. Moreover, we show that the \mds problem is \apx on \bepg graphs, ruling out the possibility of a PTAS unless \textsc{P=NP}. Finally, we give constant-factor approximation algorithms for this problem on two non-trivial subclasses of \bepg graphs. To our knowledge, these are the first results for the \mds problem on \bepg graphs, partially answering a question posed by Epstein et al.~\cite{EpsteinGM13}.
\end{abstract}

\section{Introduction}
\label{sec:introduction}
In this paper, we study \mis and \mds on \bvpg and \bepg graphs. These are two special subclasses of \emph{string graphs}, which are of interest in several applications such as circuit layout design and bioinformatics. In a string representation of a graph, the vertices are drawn as a set of curves in the plane and two vertices are adjacent if their corresponding curves intersect; graphs that can be represented in this way are called string graphs. Studying string graphs dates back to 1970s when Ehrlich et al.~\cite{EhrlichET76} showed that every planar graph has a string representation. Also, Scheinerman's conjecture~\cite{thesisEdwardScheinerman}, stating that all planar graphs can be represented as intersection graphs of line segments, was a long-standing open problem until 2009 when it was proved affirmatively by Chalopin and Goncalves~\cite{ChalopinG09}.

EPG graphs (stands for Edge intersection of Paths in a Grid) were introduced by Golumbic et al.~\cite{GolumbicLS09} as the class of graphs whose vertices can be represented as simple orthogonal paths on a rectangular grid in such a way that two vertices are adjacent if and only if the corresponding paths share at least one edge of the grid. Golumbic et al.~\cite{GolumbicLS09} showed that every graph is an EPG graph, and the size of the underlying grid is polynomial in the size of $G$. Since then much of the work has focused on studying subclasses of EPG graphs; in particular, restricting the type of paths allowed. A turn of a path at a grid node is called a \emph{bend} and a graph is called a \emph{B$_k$-EPG graph} if it has an EPG representation in which each path has at most $k$ bends. In this paper, we are interested in B$_k$-EPG graphs for $k=1$.
\begin{definition}[\bepg Graph]
\label{def:B1EPG}
A graph $G=(V,E)$ is called a \emph{\bepg graph}, if every vertex $u$ of $G$ can be represented as a \emph{path} $P_u$ on a grid $\mathcal{G}$ such that \begin{inparaenum}[(i)] \item $P_u$ is orthogonal and has at most one bend, and \item paths $P_u, P_v\in P$ share a grid edge of $\mathcal{G}$ if and only if $(u, v)\in E$. \end{inparaenum}
\end{definition}

Instead of considering the edge intersection of graphs on a grid, we can think of the vertex intersection of graphs on a grid. More formally, a graph is said to have a VPG representation (stands for Vertex representation of Paths in a Grid), if its vertices can be represented as simple orthogonal paths on a rectangular grid such that two vertices are adjacent if and only if the corresponding paths share at least one grid node. Although these graphs were considered a while ago when studying string graphs~\cite{MiddendorfP92}, they were formally investigated by Asinowski et al.~\cite{AsinowskiCGLLS12}. Similar to B$_k$-EPG graphs, a \emph{B$_k$-VPG graph} is a VPG graph in which each path has at most $k$ bends. In this paper, we are interested in B$_k$-VPG graphs for $k=1$.
\begin{definition}[\bvpg Graph]
\label{def:B1VPG}
A graph $G=(V, E)$ is called a \emph{\bvpg graph}, if every vertex $u$ of $G$ can be represented as a \emph{path} $P_u$ on a grid $\mathcal{G}$ such that \begin{inparaenum}[(i)] \item $P_u$ is orthogonal and has at most one bend, and \item two paths $P_u$ and $P_v$ intersect each other at a grid node if and only if $(u, v)\in E$. \end{inparaenum}
\end{definition}

We remark that by \emph{intersecting} each other, we exclude the case where two paths only \emph{touch} each other. Figure~\ref{fig:sampleGraphs} shows a graph with its \bvpg and \bepg representations. A natural restriction on \bvpg graphs can be the number of intersections allowed between every pair of paths. A string graph is called \emph{one-string} if it has a string representation in which curves intersect at most once~\cite{ChalopinGO10,BiedlD15}. By combining one-string and \bvpg representations, a \emph{one-string \bvpg graph} is defined as a \bvpg graph in which two paths intersect each other exactly once whenever the corresponding vertices are adjacent.
\begin{figure}[t]
\centering
\includegraphics[width=0.90\textwidth]{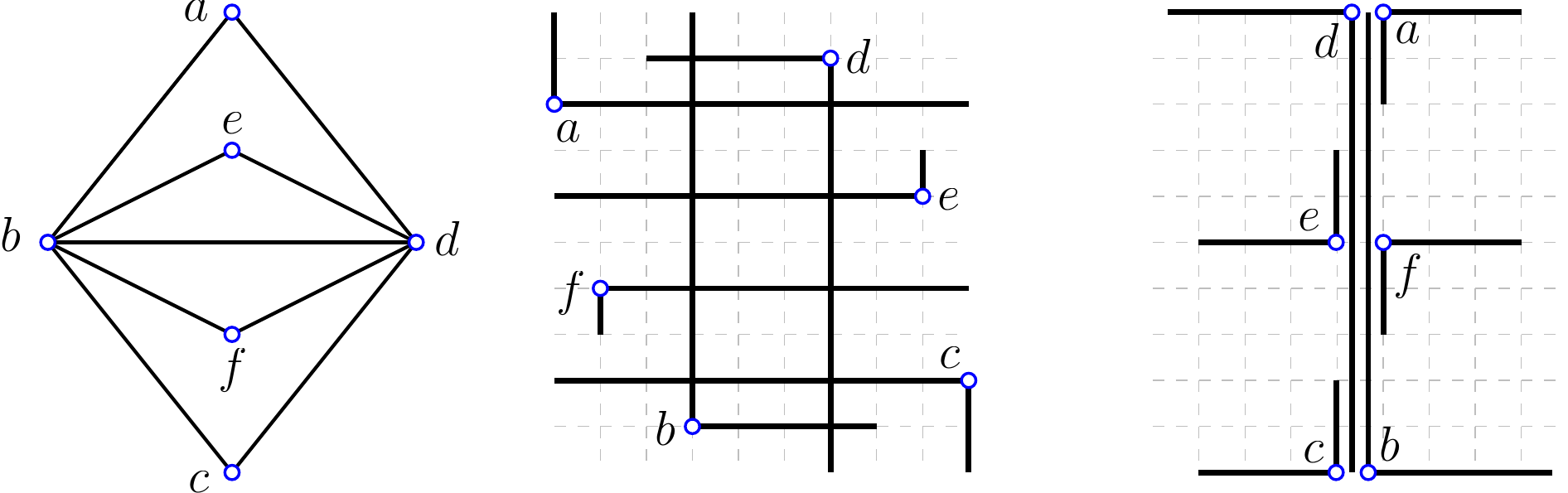}
\caption{A graph on six vertices (left) with its \bvpg (middle) and \bepg (right) representations. Notice that the vertices $e$ and $f$ are not adjacent in the \bepg representation as they only share a grid node (but not a grid edge).}
\label{fig:sampleGraphs}
\end{figure}

In this paper, we study two classical graph problems; namely, the \mis and \mds problems, on \bepg and \bvpg graphs. Let $G=(V, E)$ be an unweighted, undirected graph. A set $S\subseteq V$ is an independent set if no two vertices in $S$ are adjacent, while $S$ is a dominating set if every vertex in $V\setminus S$ is adjacent to some vertex in $S$. The objective of the \mis problem is to compute a maximum-cardinality independent set of $G$, whereas the goal in \mds is to minimize the size of the dominating set $S$. \mis and \mds are two fundamental optimization problems in graph theory, which arise in many applications such as wireless sensor networks, scheduling and resource allocation; both problems are well known to be \textsc{NP}-hard.

It is known that every graph has an EPG representation~\cite{GolumbicLS09}. Heldt et al.~\cite{HeldtKU14} showed that every graph of maximum degree $\Delta$ has a B$_\Delta$-EPG representation. Moreover, Heldt et al.~\cite{HeldtKU14a} proved that the class of B$_4$-EPG graphs contains all planar graphs; it is however still open whether $k=4$ is best possible. They also proved a conjecture of Biedl and Stern~\cite{BiedlS10} affirmatively stating that all outerplanar graphs are B$_2$-EPG graphs. For the case of \bepg graphs, Golumbic et al.~\cite{GolumbicLS09} showed that every tree is a \bepg graph, and a characterization of the subfamily of co-graphs that are \bepg is also known by a complete family of minimal forbidden induced sub-graphs~\cite{CohenGR14}. We observe by the construction of Ehrlich et al.~\cite{EhrlichET76} that every planar graph has a VPG representation. Moreover, Chaplick and Ueckerdt~\cite{ChaplickU13} proved that every planar graph has a B$_2$-VPG representation, although the representation may not be one-string. On the other hand, Chalopin et al.~\cite{ChalopinGO10} showed that every planar graph has a one-string representation. Biedl and Derka~\cite{BiedlD15} re-proved these results simultaneously by showing that every planar graph has a one-string B$_2$-VPG representation. It is however not known if every planar graph has also a one-string B$_1$-VPG representation. Felsner et al.~\cite{FelsnerKMU16} proved that every planar 3-tree is \bvpg and the line graph of every planar graph is a \bvpg graph.

In terms of recognition problems, it is known that recognizing \bepg graphs is \textsc{NP}-hard~\cite{HeldtKU14}, and the hardness of recognizing B$_2$-EPG graphs is shown by Pergel and Rz{\k{a}}{\.{z}}ewski~\cite{PergelR16}. Moreover, recognizing VPG graphs is \textsc{NP}-complete~\cite{ChaplickJKV12}. They also showed that recognizing if a given B$_{k+1}$-VPG graph is a B$_k$-VPG graph is \textsc{NP}-complete even if we are given a B$_{k+1}$-VPG representation as part of the input. Furthermore, Chaplick et al.~\cite{ChaplickCS11} considered the recognition problem for some restricted classes of B$_0$-VPG graphs.

\paragraph{Related work and our results.} The decision version of the \mis problem on \bvpg graphs was shown to be \textsc{NP}-complete by Lahiri et al.~\cite{LahiriMS15} who also gave an $O((\log n)^2)$-approximation algorithm for this problem. Notice that the best known algorithm for \mis on arbitrary string graphs has an approximation factor $n^{\epsilon}$, for some $\epsilon>0$~\cite{FoxP11}.
\begin{itemize}
\item We give a polynomial-time $O(\log n)$-approximation algorithm for the \mis problem on \bvpg graphs, improving the $O((\log n)^2)$-approximation algorithm of Lahiri et al.~\cite{LahiriMS15}.
\end{itemize}

Our algorithm is actually based on that of Lahiri et al.~\cite{LahiriMS15}, which is a divide-and-conquer approach~\cite{clrsBook,deBergBook}. However, instead of solving a special subproblem using a secondary divide-and-conquer, we show how to solve that subproblem directly and improve the approximation factor.
For \mds on \bvpg graphs, Asinowski et al.~\cite{AsinowskiCGLLS12} proved that every circle graph is a \bvpg graph. A circle graph is the intersection graph of a set of chords of a circle; that is, each vertex corresponds to a chord of the circle and two vertices are adjacent if and only if their corresponding chords intersect. The proof of Asinowski et al.~\cite{AsinowskiCGLLS12} shows that every circle graph is in fact a one-string \bvpg graph (although not every one-string \bvpg graph is a circle graph~\cite{Bouchet94}). Since \mds is \apx on circle graphs~\cite{DamianP06}, the problem becomes \apx also on one-string \bvpg graphs.\footnote{We note here that, unlike the \mds problem, \mis is polynomial-time solvable on circle graphs~\cite{NashG10}.} However, to our knowledge, there is no approximation algorithm known for the problem. We note that there are $O(1)$-approximation algorithms for \mds on circle graphs~\cite{Damian-IordacheP99a,Damian-IordacheP02}, but these algorithms do not work for \bvpg graphs as they heavily rely on the fact that the vertices of the input graph are modelled as chords of a circle.
\begin{itemize}
\item We give a polynomial-time $O(1)$-approximation algorithm for the \mds problem on \emph{one-string} \bvpg graphs. To our knowledge, this is the first constant-factor approximation algorithm for this problem on \bvpg graphs.
\end{itemize}

Our algorithm is based on formulating the problem as a hitting set problem among orthogonal line segments and then proving that the corresponding hitting set has a small $\varepsilon$-net. However, proving the existence of such an $\varepsilon$-net is not straightforward and requires a decomposition of the problem into two ``one-dimensional'' instances, each of which will then admit such an $\varepsilon$-net. Informally speaking, the idea is that although the \mds problem is not ``decomposable'' into horizontal and vertical instances, the $\varepsilon$-net corresponding to this problem is. Combining these nets along with the technique of Br\"{o}nnimann and Goodrich~\cite{BronnimannG95} gives us the desired result.

For \bepg graphs, the \mis problem is known to be \textsc{APX}-hard~\cite{BougeretBGP15}, and there exists a polynomial-time 4-approximation algorithm for this problem~\cite{EpsteinGM13}. Moreover, there exists a 4-approximation algorithm for the \mds problem on \bepg graphs~\cite{ButmanHLR10}, as it is known that such graphs are a subset of 2-interval graphs~\cite{HeldtKU14}. However, we were unable to find a reference on the complexity of the \mds problem on \bepg graphs. We note here that every tree is a \bepg graph~\cite{GolumbicLS09}, and even permutation graphs or co-graphs are known to be \bepg\cite{CohenGR14}, but \mds is polynomial-time solvable on these graphs~\cite{graphTheoryBook}.
\begin{itemize}
\item We prove that the \mds problem is \apx on \bepg graphs, even if only two types of paths are allowed in the input graph. Thus, there exists no PTAS for this problem on \bepg graphs unless \textsc{P=NP}.
\end{itemize}

Let us also mention that Asinowski and Ries~\cite{AsinowskiR12} showed that the class of \bepg graphs contains a number of subclasses of chordal graphs; namely, chordal bull-free, chordal claw-free and chordal diamond-free graphs; however, we were unable to find any reference on the status of the \mds problem on these subclasses of chordal graphs. To show the \textsc{APX}-hardness, we give an \textsc{L}-reduction from the \mvc problem on graphs with maximum-degree three (which is known to be \apx~\cite{AlimontiK00}) to the \mds problem on \bepg graphs. As noted above, there is a 4-approximation algorithm for the \mds problem on \bepg graphs~\cite{ButmanHLR10}. Here, we give $c$-approximation algorithms for the \mds problem on two non-trivial subclasses of \bepg graphs, for $c\in\{2, 3\}$. Let us defer the definition of these subclasses to Section~\ref{subsec:MDSonBEPG}.
\begin{itemize}
\item We give polynomial-time constant-factor approximation algorithms for the \mds problem on two subclasses of \bepg graphs.
\end{itemize}

\paragraph{Organization.} This paper is organized based on the problems as follows. We first give some notations and definitions in Section~\ref{sec:prelimins}. Then, we give our $O(\log n)$-approximation algorithm for the \mis problem on \bvpg graphs in Section~\ref{sec:independentSet}. Next, we consider the \mds problem on \bvpg and \bepg graphs in Section~\ref{sec:dominatingSet}. We first give the $O(1)$-approximation algorithm for this problem on one-string \bvpg graphs in Section~\ref{subsec:MDSonBVPG}, and then present our results for this problem on \bepg graphs in Section~\ref{subsec:MDSonBEPG}. We conclude the paper with a discussion on open problems in Section~\ref{sec:conclusion}.

\section{Notation and Definitions}
\label{sec:prelimins}
For a \bvpg (resp., \bepg) graph $G$, we use $\langle$\pivi$,\mathcal{G}\rangle$ (resp., $\langle$\piei$,\mathcal{G}\rangle$) to denote a particular \bvpg (resp., \bepg) representation of $G$, where \pivi (resp., \piei) is the collection of paths correspond to the vertices of $G$ and $\mathcal{G}$ is the underlying grid. Since the recognition problem is \textsc{NP}-hard on such graphs, we assume throughout this paper, that we are always given a string representation of a \bvpg or \bepg graph as input (in addition to $G$). We sometimes violate the wording and say \emph{path(s) in $G$} to actually refer to the vertices in $G$ corresponding to the paths in \pivi or \piei. We denote the $x$- and $y$-coordinates of a point $p$ in the plane by $x(p)$ and $y(p)$, respectively.

Let $G=(V, E)$ be a graph with its representations $\langle$\pivi$,\mathcal{G}\rangle$ and $\langle$\piei$,\mathcal{G}\rangle$, and let $P\in\{$\pivi$\cup$\piei$\}$ be an arbitrary path. Since $P$ has at most one bend, it is in one of the types \{\uchap, \dchap, \urast, \drast, $|$, $-$\}. We call a path $P$ of type $\texttt{x}\in$\{\uchap, \dchap, \urast, \drast\}, a \emph{$\texttt{x}$-type path}; we complete the definition by referring to no-bend paths as \dchap-type paths.

We denote the horizontal and vertical segments of $P$ by $\hpart(P)$ and $\vpart(P)$, respectively. We call the common endpoint of $\hpart(P)$ and $\vpart(P)$ the \emph{corner} of $P$ and denote it by $\cor(P)$. Moreover, let $\htip(P)$ (resp., $\vtip(P)$) denote the endpoint of $\hpart(P)$ (resp., $\vpart(P)$) that is not shared with $\vpart(P)$ (resp., not shared with $\hpart(P)$). Let $N[P]$ denote the set of paths adjacent to $P$; we assume that $P\in N[P]$. Moreover, for a set $S$ of paths, define $N[S]:=\cup_{P\in S}N[P]$. We denote the set of neighbours of $P$ that share at least one grid edge with $\hpart(P)$ (resp., with $\vpart(P)$) by $\hnei(P)$ (resp., by $\vnei(P)$). Throughout this paper, we assume a weak general-position assumption on the corners of paths in \piei only: for every two distinct paths $P, P'\in$\piei, either $x(\cor(P))\ne x(\cor(P'))$ or $y(\cor(P))\ne y(\cor(P'))$; that is, the corners of no two paths in \piei coincide. Notice that by our weak general-position assumption, $\{\hnei(P), \vnei(P)\}$ is a partition of $N[P]$ for any $P\in$\piei.

\section{Independent Set}
\label{sec:independentSet}
In this section, we consider the \mis problem on \bvpg graphs, and give an $O(n^3)$-time $O(\log n)$-approximation algorithm for this problem. This improves the previous $O((\log n)^2)$-approximation algorithm of Lahiri et al.~\cite{LahiriMS15}. Our algorithm is actually based on that of Lahiri et al.~\cite{LahiriMS15} and uses a divide-and-conquer approach. We first describe their algorithm and then show how a simple modification can improve the approximation factor, although the running time would slightly increase. For the rest of this section, let $G$ be a \bvpg graph with $n$ vertices, and let $\langle$\pivi$,\mathcal{G}\rangle$ denote a \bvpg representation of it.

\paragraph{Approximating a subproblem.} To solve \mis on \bvpg graphs, Lahiri et al.~\cite{LahiriMS15} first considered a subproblem in which all paths in \pivi are assumed to be of \dchap-type. So, suppose for now that $G$ is a \bepg graph such that all paths in \pivi are of \dchap-type. The algorithm sorts the paths in \pivi by the $x(\cor(P))$ of all paths $P$ and considers the vertical line $\ell: x=x_{med}$, where $x_{med}$ is the middle point between $x(\cor(P_{\lfloor n/2\rfloor}))$ and $x(\cor(P_{\lfloor n/2\rfloor+1}))$ in the ordering. Then, the paths are partitioned into three groups: the set $I_L$ of paths that lie to the left of $\ell$, the set $I_R$ of paths that lie to the right of $\ell$ and the set $I_M$ of paths that intersect $\ell$. The algorithm solves \mis on $I_L$ and $I_R$ recursively and returns the larger of $S(I_L)\cup S(I_R)$ and $S(I_M)$ as the solution, where $S(A)$ is a maximum independent set on the set of paths in $A$.

To compute $S(I_M)$, the algorithm performs the above divide-and-conquer approach on the sorted $y(\cor(P))$ of only paths $P\in I_M$; hence, partitioning the paths in $I_M$ into three analogous groups $I'_L$, $I'_R$ and $I'_M$. Notice that the paths in $I'_M$ are intersecting a vertical and a horizontal line. Lahiri et al.~\cite{LahiriMS15} showed that the \bvpg graph induced by the paths in $I_M\cap I'_M$ is a co-comparability graph for which \mis can be solved optimally in polynomial time~\cite{KohlerM16,Golumbic04}. They analyzed this algorithm and showed that it has an $O((\log n)^2)$-approximation factor.

To improve the approximation factor to $O(\log n)$, we avoid running the divide-and-conquer algorithm on the $y(\cor(P))$ of paths $P\in I_M$. Instead, we directly solve \mis on paths in $I_M$ and then return the larger of $S(I_L)\cup S(I_R)$ and $S(I_M)$ as the solution. To solve \mis on $I_M$, we show that the \bvpg graph $G_{I_M}$ induced by the paths in $I_M$ is an \emph{outerstring graph} for which we can solve \mis optimally using the recent algorithm of Keil et al.~\cite{KeilMPV15}. See Algorithm~\ref{alg:approxMISonVPG} in which \textsc{MIS}$(H)$ denotes the algorithm of Keil et al.~\cite{KeilMPV15} for computing a maximum independent set on an outerstring graph $H$.

It remains to show that $G_{I_M}$ is an outerstring graph. Outerstring graphs were first introduced by Kratochv\'{i}l~\cite{Kratochvil91,Kratochvil91a} as the intersection graphs of a set of curves in the plane that lie inside a disk such that each curve intersects the boundary of the disk in one of its endpoints. Keil et al.~\cite{KeilMPV15} gave an $O(N^3)$-time algorithm for \mis on outerstring graphs, where they model the disk as a polygon and each curve as a polygonal line (i.e., a chain of line segments) completely inside the polygon such that one of its endpoints coincides with a vertex of the polygon. Here, $N$ denotes the total number of segments used in the geometric representation of the polygonal lines and the polygon. The main part of the proof of Lemma~\ref{lem:misOnIM} is to show that $G_{I_M}$ is an outerstring graph.
\begin{lemma}
\label{lem:misOnIM}
\mis can be solved on $G_{I_M}$ in $O(N^3)$ time, where $N$ is the number of paths in $I_M$.
\end{lemma}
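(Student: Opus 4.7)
The plan is to exhibit $G_{I_M}$ as an outerstring graph and then invoke the $O(N^3)$-time \mis algorithm of Keil et al.~\cite{KeilMPV15} on it. Recall the structural setup: every $P\in I_M$ is of \dchap-type, so writing $\cor(P)=(x_P,y_P)$, $\htip(P)=(h_P,y_P)$, and $\vtip(P)=(x_P,v_P)$ we have $x_P<x_{med}<h_P$ and $v_P>y_P$. In particular, the vertical segment of every path in $I_M$ lies strictly to the left of $\ell$, while its horizontal segment crosses $\ell$ and extends to its right. This asymmetry is exactly what the construction will exploit.

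To build the outerstring representation, I would pick $X>\max_{P\in I_M}h_P$ and enclose $I_M$ in a large axis-aligned rectangle $D$ whose right side is the vertical line $x=X$. For each $P\in I_M$, I would redraw $P$ by extending its horizontal segment rightward from $\htip(P)$ all the way to $(X,y_P)$. The resulting curve is still an orthogonal polyline with a single bend; its $\vtip(P)$ lies in the interior of $D$, and its other endpoint $(X,y_P)$ lies on the boundary of $D$. I would then refine the polygon $\partial D$ by inserting a vertex at each $(X,y_P)$, so every modified path is anchored at a distinct polygon vertex, matching the Keil et al.\ model.

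The heart of the proof is verifying that this extension neither creates nor destroys intersections. The extension of $P$ occupies the slab $\{(x,y_P):h_P\le x\le X\}$. For any other $Q\in I_M$, $\vpart(Q)$ lies at $x=x_Q<x_{med}<h_P$ and is therefore disjoint from the extension; and $\hpart(Q)$ lies at height $y_Q\neq y_P$, so it too is disjoint from the extension, once we assume the standard general-position condition that the values $y_P$ are pairwise distinct. (This is essentially free: any two paths sharing a horizontal row would necessarily overlap on grid edges surrounding $\ell$ and hence be adjacent in $G_{I_M}$, so coincident rows can be perturbed or collapsed without altering the graph.) Thus the extensions preserve the intersection graph, and the collection is a legitimate outerstring representation.

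The resulting representation uses $2|I_M|$ curve segments (after merging each original horizontal segment with its extension into a single segment) plus $O(|I_M|)$ boundary edges of the refined polygon, for a total of $O(|I_M|)$ segments. Feeding this to the algorithm of Keil et al.~\cite{KeilMPV15} yields a maximum independent set of $G_{I_M}$ in $O(N^3)$ time with $N=|I_M|$, as claimed. The main subtlety I anticipate is bookkeeping around the \bvpg convention that ``touching'' is distinct from ``intersecting,'' but this is dispatched by the same general-position perturbation used above and does not affect the geometric core of the argument.
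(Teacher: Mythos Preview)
Your argument is correct and yields the same $O(N)$-size outerstring representation, but you take a different route than the paper. The paper anchors the paths by \emph{truncating} them at $\ell$: it puts the right side of the bounding rectangle on $\ell$ itself and discards everything to the right. You instead anchor them by \emph{extending} each horizontal segment rightward past all $\htip$'s to a common abscissa $X$. Both constructions work, but the invariants one must check are dual. Under truncation, no new intersection can possibly be created; the paper only needs to observe that nothing is lost, which is immediate because to the right of $\ell$ only horizontal segments survive, and two horizontal segments that meet there (same $y$-coordinate, both crossing $\ell$) already overlap on the left of $\ell$ as well. Under your extension, nothing can be lost; you must instead argue that nothing is created, which takes the case analysis you gave (vertical parts lie left of $x_{med}<h_P$; horizontal parts at a different height miss the extension; horizontal parts at the same height are already adjacent). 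The paper's version is marginally slicker in that it avoids the general-position discussion about coincident $y$-values that you had to add, but your version has the minor aesthetic advantage of keeping the full paths intact. Either way the representation has $O(N)$ segments and the Keil et al.\ algorithm finishes the job.
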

\begin{proof}
We prove that $G_{I_M}$ is an outerstring graph with an $O(N)$ geometric representation. The lemma then follows by the algorithm of Keil et al.~\cite{KeilMPV15} for solving \mis on outerstring graphs.

Enclose the paths in $I_M$ within a large-enough vertical rectangle $R$ such that the vertical line $\ell$ contains the right side of $R$; ignore the portion of every path of $I_M$ that lies to the right of $\ell$ (and hence outside of $R$). Next, convert $R$ into a polygon $P$ by slightly modifying $R$ at the intersection point of each path in $I_M$ with the right side of $R$ such that the path ends at a vertex of the polygon $P$; see Figure~\ref{fig:ourOuterstring} for an illustration. First, notice that by definition of $\ell$, every path in $I_M\cap R$ intersects the right side of $R$ in exactly one point. Moreover, since all paths in $I_M$ are \dchap-type paths, no two paths in $I_M$ would intersect each other on the right of $\ell$ without first intersecting each other inside $R$; hence, $R$ captures all the adjacencies between the paths in $I_M$. The resulting representation models our outerstring graph.

Clearly, $P$ is a simple polygon with $O(N)$ vertices (and, hence $O(N)$ edges). Moreover, each path inside $P$ consists of two segments and so the total number of segments to represent all the paths in $I_M$ is $O(N)$. Therefore, $G_{I_M}$ is outerstring and the above construction is an $O(N)$ geometric representation of $G_{I_M}$.
\end{proof}

\begin{figure}[t]
\centering
\includegraphics[page=1,width=0.80\textwidth]{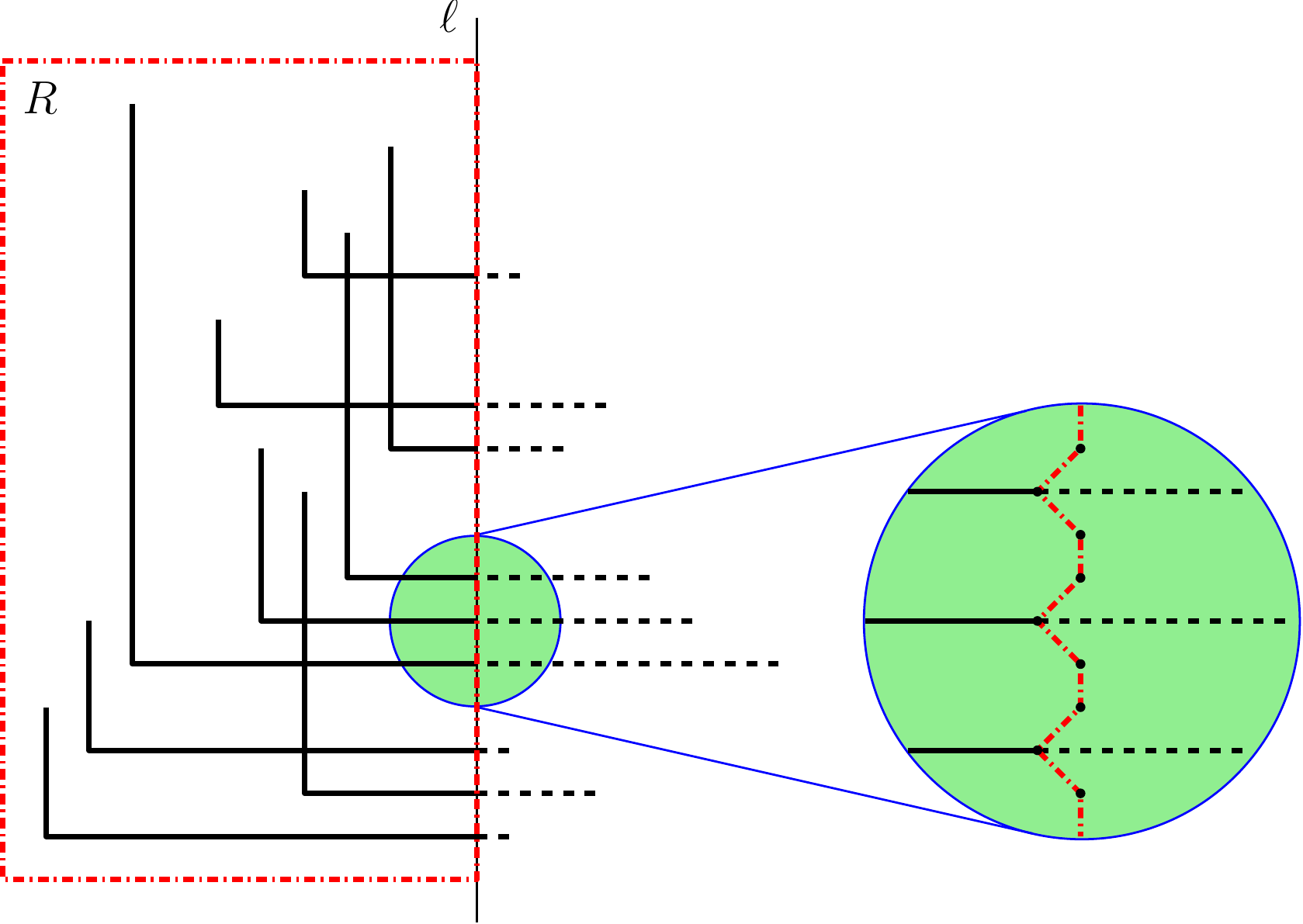}
\caption{A illustration in support of the proof of Lemma~\ref{lem:misOnIM}.}
\label{fig:ourOuterstring}
\end{figure}

\begin{algorithm}[t]
\caption{\textsc{ApproximateMISonVPG}($G$)}
\label{alg:approxMISonVPG}
\begin{algorithmic}[1]
\If{$n\leq 2$}
	\State compute a maximum independent set $S$ of $G$ in $O(1)$ time;
\Else
	\State compute $x_{med}$ and the partition $\{I_L, I_R, I_M\}$;
	\State $S(I_L)\gets$ ApproximateMISonVPG($I_L$);
	\State $S(I_R)\gets$ ApproximateMISonVPG($I_R$);
	\State $S(I_M)\gets$\textsc{MIS}$(I_M)$;
	\State $S\gets$ the larger of $S(I_L)\cup S(I_R)$ and $S(I_M)$;
\EndIf
\State\Return $S$;
\end{algorithmic}
\end{algorithm}

\begin{lemma}
\label{lem:approxMISonVPGfactor}
Algorithm~\ref{alg:approxMISonVPG} is an $O(n^3)$-time $O(\log n)$-approximation algorithm for the \mis problem on $G$.
\end{lemma}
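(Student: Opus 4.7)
The plan is to analyze both the approximation factor and the running time. The running time will follow from the master theorem, while the approximation ratio reduces to a carefully set-up recurrence. I would first treat the subproblem in which every path in \pivi is of \dchap-type (so that Lemma~\ref{lem:misOnIM} applies), and then extend to general \bvpg inputs by partitioning by path-type. The key structural observation is that because every path in $I_L$ lies strictly to the left of $\ell$ and every path in $I_R$ strictly to the right, no $I_L$-path is adjacent to any $I_R$-path in $G$, and hence $S(I_L)\cup S(I_R)$ is automatically an independent set in $G[I_L\cup I_R]$.

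Let $\alpha(n)$ denote the worst-case approximation ratio of Algorithm~\ref{alg:approxMISonVPG} on \dchap-only instances with $n$ paths, and write $A=|\opt(I_L\cup I_R)|$ and $B=|\opt(I_M)|$. Then $|\opt(G)|\leq A+B$, and the induction hypothesis (using $|I_L|,|I_R|\leq n/2$ together with the disjointness above) gives $|S(I_L)|+|S(I_R)|\geq A/\alpha(n/2)$, while Lemma~\ref{lem:misOnIM} yields $|S(I_M)|=B$ exactly. Since the algorithm returns the larger of the two candidates, $|S|\geq \max\{A/\alpha(n/2),\,B\}$; this quantity is worst against $A+B$ precisely when $A/\alpha(n/2)=B$, in which case $|S|\geq |\opt(G)|/(\alpha(n/2)+1)$. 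Thus $\alpha(n)\leq \alpha(n/2)+1$, and together with the base case $\alpha(O(1))=O(1)$ from the ``$n\leq 2$'' branch of Algorithm~\ref{alg:approxMISonVPG}, this unrolls to $\alpha(n)=O(\log n)$.

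To handle a general \bvpg graph, I would partition $V(G)$ into (at most) four classes by path-type (\uchap, \dchap, \urast, \drast, with horizontal and vertical no-bend paths folded into the \dchap-class as in Section~\ref{sec:prelimins}), run the subproblem algorithm on each induced subgraph, and return the best of the four outputs. Since some class must contain at least $|\opt(G)|/4$ vertices of a fixed optimal independent set, and since any independent set of an induced subgraph remains independent in $G$, this costs only a constant factor and preserves the $O(\log n)$ bound. For the running time, the recurrence $T(n)\leq 2T(n/2)+O(n^3)$ --- where the $O(n^3)$ term is the cost of Lemma~\ref{lem:misOnIM} on $O(|I_M|)=O(n)$ segments --- gives $T(n)=O(n^3)$ by the master theorem, and the four type-class invocations only add a constant factor.

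The main subtlety I expect is the recurrence itself: taking the \emph{max} of the two candidate solutions rather than their sum is exactly what turns the na\"ive bound $\alpha(n)\leq 2\alpha(n/2)$, which would yield a linear ratio, into $\alpha(n)\leq \alpha(n/2)+1$. Once that framing is in place, and once the geometric disjointness of $I_L$ and $I_R$ is recorded, the rest of the argument is routine algebra.
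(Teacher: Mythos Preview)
Your proposal is correct and follows essentially the same approach as the paper: both arguments use the disjointness of $I_L$ and $I_R$ for feasibility, the recurrence $T(n)=2T(n/2)+O(n^3)$ for the running time, and the balancing of $S(I_L)\cup S(I_R)$ against the exact solution $S(I_M)$ for the approximation factor. The only cosmetic difference is that the paper proves $|S|\geq |O^*|/\log n$ directly by induction (splitting on whether $|O^*\cap I_M|\geq |O^*|/\log n$), whereas you phrase the same computation as the recurrence $\alpha(n)\leq \alpha(n/2)+1$; the two are equivalent. Your extra paragraph on the four path-types is not needed for this lemma---the paper states Lemma~\ref{lem:approxMISonVPGfactor} only for the \dchap-type subproblem and defers the partition-by-type step to Theorem~\ref{thm:approxMISonVPG}---but it is harmless.
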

\begin{proof}
Clearly, the paths in $I_L$ are disjoint from those in $I_R$; so, $S(I_L)\cup S(I_R)$ is an independent set. Therefore, the solution returned by the algorithm is a feasible solution for \mis on $G$. Let $T(n)$ denote the running time of the algorithm on any \bvpg graph $G$ with $n$ vertices. If $n\leq 2$, then clearly $T(n)=O(1)$. Otherwise, we first compute $x_{med}$ in $O(n\log n)$ time by sorting all the paths $P$ by $x(\cor(P))$. Then, by the choice of $x_{med}$, we solve two instances of the problem recursively each of which having size $O(n/2)$. Since the algorithm of Keil et al.~\cite{KeilMPV15} takes $O(n^3)$ time, we have $T(n)=2T(n/2)+O(n^3)$, which is solved to $T(n)=O(n^3)$.

Now, let $O^*$ be a maximum independent set in $G$. Moreover, let $O^*_L$, $O^*_R$ and $O^*_M$ denote a maximum independent set for the paths in $I_L$, $I_R$ and $I_{M}$, respectively. We prove by induction on $n$ that $\lvert S\rvert\geq\lvert O^*\rvert/\log n$, hence completing the proof of the lemma. If $n\leq 2$, then the lemma is clearly true. Suppose that the lemma is true for all $m<n$. Notice that since the algorithm directly computes a maximum independent set of $I_M$, we have $\lvert S(I_M)\rvert=\lvert O^*_M\rvert\geq\lvert O^*\cap I_M\rvert$. By induction hypothesis, we know that
\begin{equation*}
\lvert S(I_L)\rvert\geq\frac{\lvert O^*_L\rvert}{\log(n/2)}\geq\frac{\lvert O^*\cap I_L\rvert}{\log n - 1} \mbox{ and } \lvert S(I_R)\rvert\geq\frac{\lvert O^*_R\rvert}{\log(n/2)}\geq\frac{\lvert O^*\cap I_R\rvert}{\log n - 1}.
\end{equation*}
Since $S(I_L)\cap S(I_R)=\emptyset$, we have
\begin{align*}
\lvert S\rvert=\max\{\lvert S(I_L)\cup S(I_R)\rvert, \lvert S(I_M)\rvert\} & \geq\max\{\frac{\lvert O^*\cap I_L\rvert + \lvert O^*\cap I_R\rvert}{\log n-1}, \lvert O^*\cap I_M\rvert\}\\
& \geq\max\{\frac{\lvert O^*\rvert - \lvert O^*\cap I_M\rvert}{\log n-1}, \lvert O^*\cap I_M\rvert\}.
\end{align*}
If $\lvert O^*\cap I_M\rvert\geq\lvert O^*\rvert/\log n$, then we are done. Otherwise,
\begin{align*}
\frac{\lvert O^*\rvert - \lvert O^*\cap I_M\rvert}{\log n-1}\geq\frac{\lvert O^*\rvert - \lvert O^*\rvert/\log n}{\log n-1}=\frac{\lvert O^*\rvert}{\log n},
\end{align*}
which proves the induction step.
\end{proof}

\paragraph{Original problem.} To approximate the original problem in which \pivi consists of all types of paths (i.e., \{\uchap, \dchap, \urast, \drast, $|$, $-$\}), we run the above algorithm four times once for each of the four types of paths (recall that we defined no-bend paths as \dchap-type paths), and then return the largest solution as the final answer. By Lemma~\ref{lem:approxMISonVPGfactor}, our final answer is at least $\frac{1}{4}(\log n)$ times an optimal solution for the original problem, where $\lvert$\pivi$\rvert=n$. This gives us the main result of this section.
\begin{theorem}
\label{thm:approxMISonVPG}
There exists an $O(n^3)$-time $O(\log n)$-approximation algorithm for the \mis problem on any \bvpg graph with $n$ vertices.
\end{theorem}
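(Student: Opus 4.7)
The plan is to reduce the general case to the special case handled by Lemma~\ref{lem:approxMISonVPGfactor}. Specifically, since the subproblem algorithm works when every path in \pivi is \dchap-type, I would partition the vertex set $V$ of $G$ into four classes based on path type, treating horizontal and vertical no-bend paths together with the \dchap-type paths (as the paper stipulates). So let $V_{\texttt{x}}\subseteq V$ consist of the \texttt{x}-type vertices for $\texttt{x}\in\{\uchap,\dchap,\urast,\drast\}$, let $G_{\texttt{x}}$ be the induced subgraph on $V_{\texttt{x}}$, and observe that $V=V_{\uchap}\cup V_{\dchap}\cup V_{\urast}\cup V_{\drast}$.

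Next I would argue that Algorithm~\ref{alg:approxMISonVPG} together with Lemma~\ref{lem:approxMISonVPGfactor} applies to each $G_{\texttt{x}}$ in turn. The lemma was phrased for \dchap-type paths, but the divide-and-conquer step only relies on a one-sided sweeping argument: the paths in $I_M$ all share the property that they extend from their corner in two fixed orthogonal directions, which is precisely what let us enclose them in a rectangle, cut them by $\ell$, and realize the induced graph as an outerstring graph. The same proof carries over verbatim to \uchap, \urast, and \drast after an appropriate reflection of the grid, so running Algorithm~\ref{alg:approxMISonVPG} on each $G_{\texttt{x}}$ yields an independent set $S_{\texttt{x}}$ with $\lvert S_{\texttt{x}}\rvert\geq\lvert O^*_{\texttt{x}}\rvert/\log n$, where $O^*_{\texttt{x}}$ is a maximum independent set of $G_{\texttt{x}}$.

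Finally, I would combine the four runs. Let $O^*$ be a maximum independent set of $G$ and set $O^*\cap V_{\texttt{x}}$; this is an independent set in $G_{\texttt{x}}$, so $\lvert O^*_{\texttt{x}}\rvert\geq\lvert O^*\cap V_{\texttt{x}}\rvert$. Since the four classes partition $V$, pigeonhole gives some $\texttt{x}^*$ with $\lvert O^*\cap V_{\texttt{x}^*}\rvert\geq\lvert O^*\rvert/4$, and therefore
\[
\max_{\texttt{x}}\lvert S_{\texttt{x}}\rvert\;\geq\;\lvert S_{\texttt{x}^*}\rvert\;\geq\;\frac{\lvert O^*_{\texttt{x}^*}\rvert}{\log n}\;\geq\;\frac{\lvert O^*\rvert}{4\log n},
\]
which is $O(\log n)$-approximate as claimed. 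The running time is $4\cdot O(n^3)=O(n^3)$.

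The only genuinely subtle point, and the one I would spend time writing carefully, is the justification that the construction in Lemma~\ref{lem:misOnIM} really is symmetric under the four rotations/reflections needed to handle all four path types; once that is settled, the bookkeeping above is routine and no new ideas are required.
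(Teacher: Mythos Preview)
Your proposal is correct and follows essentially the same approach as the paper: partition the paths into the four bend types, run Algorithm~\ref{alg:approxMISonVPG} on each class, and return the largest of the four independent sets, using pigeonhole to conclude a $4\log n$ approximation in $O(n^3)$ time. The paper's own argument is the one-paragraph ``Original problem'' passage immediately preceding the theorem; your write-up is somewhat more explicit about the reflection symmetry needed to apply Lemma~\ref{lem:approxMISonVPGfactor} to each of the four types, which the paper leaves implicit.
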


\section{Dominating Set}
\label{sec:dominatingSet}
In this section, we consider the \mds problem on \emph{one-string} \bvpg and \bepg graphs. We first give our $O(1)$-approximation algorithm for this problem on one-string \bvpg graphs in Section~\ref{subsec:MDSonBVPG}, and then will present our results for this problem on \bepg graphs in Section~\ref{subsec:MDSonBEPG}.

\subsection{One-string \bvpg Graphs}
\label{subsec:MDSonBVPG}
Recall that Asinowski et al.~\cite{AsinowskiCGLLS12} proved that every circle graph is a one-string \bvpg graph. Since \mds is known to be \apx on circle graphs~\cite{DamianP06}, the \mds problem becomes \apx also on one-string \bvpg graphs. In the following, we give an $O(1)$-approximation algorithm for \mds on one-string \bvpg graphs. To our knowledge, this is the first $O(1)$-approximation algorithm for this problem on such graphs. For the rest of this section, let $G$ be a one-string \bvpg graph with $n$ vertices, and let $\langle$\pivi$,\mathcal{G}\rangle$ denote its string representation.

Our algorithm is based on first formulating the problem on $G$ as a hitting set problem and then computing a small $\varepsilon$-net for the corresponding instance of the hitting set problem. Having such an $\varepsilon$-net along with the technique of Br\"{o}nnimann and Goodrich~\cite{BronnimannG95} gives us an $O(1)$-approximation algorithm for \mds on $G$. We first formulate \mds on $G$ as a hitting set problem and then describe our $\varepsilon$-net for it.

\begin{wrapfigure}{r}{0.35\textwidth}
\centering
\includegraphics[scale=0.85]{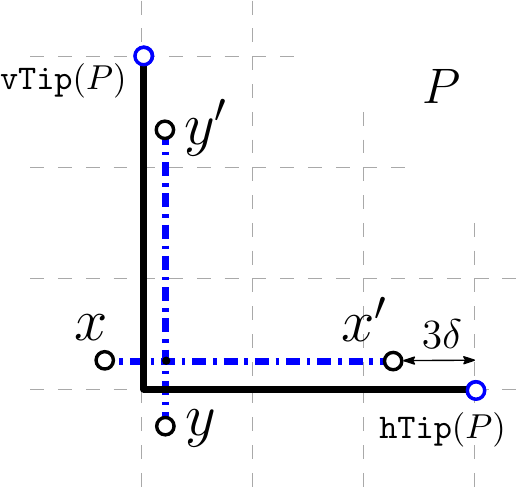}
\label{fig:aCross}
\end{wrapfigure}
\paragraph{Hitting set.} A \emph{set system} is a pair $\mathcal{R}=(\mathcal{U}, \mathcal{S})$, where $\mathcal{U}$ is a ground set of elements and $\mathcal{S}$ is a collection of subsets of $\mathcal{U}$. A \emph{hitting set} for the set system $(\mathcal{U}, \mathcal{S})$ is a subset $M$ of $\mathcal{U}$ such that $M\cap S\neq \emptyset$ for all $S\in\mathcal{S}$; we call each element of $\mathcal{U}$ a \emph{hitting element}.
For the \mds problem on $G$, we construct the set system $(\mathcal{U}, \mathcal{S})$ as follows. Let $P$ be any path in \pivi and assume w.l.o.g. that it is a \dchap-type path. We associate a \emph{cross} $c:=(\ell_H, \ell_V)$ to $P$ in which $\ell_H$ and $\ell_V$ denote its horizontal and vertical segments, respectively; we call $\ell_H$ and $\ell_V$ the \emph{supporting segments} of $c$. The left endpoint of $\ell_H$ is $x:=(x(\cor(P))-\delta, y(\cor(P)))$ and its right endpoint is $x':=((x(\htip(P))-3\delta, y(\cor(P)))$ in which $\delta:=s/4$, where $s$ is the length of a grid edge. See the figure on right for an illustration. Similarly, the bottom endpoint of $\ell_V$ is $y:=(x(\cor(P), y(\cor(P))-\delta)$ and its top endpoint is $y':=((x(\cor(P)), y(\vtip(P))-3\delta)$. We denote the cross of $P$ by $\cro(P)$. The following observation is immediate by the construction of a cross.
\begin{observation}
\label{obs:crossAndPath}
Let $P_u, P_v$ be two paths in \pivi. Then, $P_u$ and $P_v$ intersect each other at a grid node if and only if $\cro(P_u)$ and $\cro(P_v)$ intersect each other.
\end{observation}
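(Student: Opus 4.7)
The plan is to prove the equivalence by direct geometric case analysis based on the construction of $\cro(P)$. First I would record the structural fact that each supporting segment of $\cro(P)$ lies on the same grid line as the corresponding horizontal or vertical part of $P$, is extended by $\delta$ past $\cor(P)$, and is retracted by $3\delta$ from the corresponding tip of $P$. Since $\delta = s/4 < s$, these perturbations stay strictly within a single grid edge from each endpoint; in particular, the interior of each supporting segment passes through exactly the same grid nodes as the corresponding part of $P$, differing from $P$ only within distance $s$ of a tip or a corner. This reduces the proof to understanding what happens at tips and corners.

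For the forward direction, let $p$ be the (unique, by the one-string hypothesis) grid node shared by $P_u$ and $P_v$. By the \bvpg definition, sharing a grid node means $P_u$ and $P_v$ cross transversally at $p$ rather than merely touching there, so $p$ cannot be a tip of either path — otherwise the path would terminate at $p$ and the configuration would be a forbidden touch. I would then verify case by case, according to whether $p$ lies in the interior of, or at the corner of, each path's horizontal or vertical part, that the matching supporting segments of $\cro(P_u)$ and $\cro(P_v)$ both contain $p$. The $\delta$-extension past the corner is precisely what handles the case where $p = \cor(P_u)$ or $p = \cor(P_v)$; in every other case $p$ is an interior grid node of the relevant supporting segment, so the intersection of the two supporting segments at $p$ is automatic.

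For the backward direction, suppose $\cro(P_u) \cap \cro(P_v) \neq \emptyset$. Since all four supporting segments lie on grid lines of $\mathcal{G}$, any mutual intersection is either at a grid node or along a collinear overlap, which itself contains a grid node. Tracing back through the construction, I would argue that such a grid node $p$ must lie on both $P_u$ and $P_v$: the $3\delta$ retraction from each tip ensures that every grid node on a supporting segment of $\cro(P)$ is also a grid node on $P$ itself, so the only way the crosses can share a grid node is if $P_u$ and $P_v$ already share that grid node. This simultaneously rules out spurious intersections coming from one path touching another only at its tip.

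The main obstacle is the bookkeeping for the case split, since there are four path types and several sub-cases based on whether $p$ is interior, a corner, or a tip of each path's horizontal or vertical part. However, rotational symmetry among \uchap, \dchap, \urast, \drast reduces the work to a few representative pairs, and the one-string hypothesis further prunes impossible configurations, such as two collinear horizontal segments that would share more than one grid node. With those reductions in place, each remaining case is a short computation using the explicit coordinates of $x, x', y, y'$ given in the construction.
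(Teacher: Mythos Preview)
The paper offers no proof at all: the observation is simply declared ``immediate by the construction of a cross.'' Your case analysis is therefore far more than the paper provides, and the forward direction works essentially as you describe. In fact, under the one-string hypothesis a genuine crossing of $P_u$ and $P_v$ is always a transversal intersection of $\hpart$ of one with $\vpart$ of the other at an interior grid node of both segments, so the $\delta$-extension past the corner is not even needed for that direction.

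The backward direction, however, has a gap that neither the one-string hypothesis nor your tip analysis closes. You rule out spurious intersections arising at tips, but not at coinciding \emph{corners}. Take $P_u$ of $\llcorner$-type and $P_v$ of $\urcorner$-type with $\cor(P_u)=\cor(P_v)$: the two paths share only that single grid node and merely touch there (hence are not adjacent in the paper's sense), yet the $\delta$-extensions force the horizontal supporting segments of $\cro(P_u)$ and $\cro(P_v)$ to overlap on an interval of length $2\delta$ centred at the common corner, so the crosses do intersect. This configuration is one-string, and the paper's weak general-position assumption is stated only for $\mathcal{P}_{\edg}$, not for $\mathcal{P}_{\vtx}$. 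To make the observation literally true you need the additional assumption that no two corners in $\mathcal{P}_{\vtx}$ coincide (or, more broadly, that touching configurations do not occur); this is harmless for the downstream lemmas, but your proof as written neither states it nor discharges the case.
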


To see the hitting elements of the set system, for each path $P\in$\pivi, we add into $\mathcal{U}$ both of the supporting segments of $\cro(P)$. We say that a hitting element $e\in \mathcal{U}$ \emph{hits} a cross $c$ if $e$ intersects one of the supporting segments of $c$; we assume that the hitting elements corresponding to supporting segments of $c$ also hit $c$. To compute $\mathcal{S}$, consider the set of elements of $\mathcal{U}$ hitting a cross $c$ and let $\mathcal{S}$ be the collection of these sets; notice that there is a set in $\mathcal{S}$ for each cross $c$. This forms the set system $(\mathcal{U}, \mathcal{S})$ corresponding to \mds on $G$. For a set $S\in\mathcal{S}$, we say that an element $e\in\mathcal{U}$ \emph{intersects} $S$ if $e\in S$.
\begin{lemma}
\label{lem:ifMDSthenMHS}
If there is a dominating set of size $k$ on the one-string \bvpg graph $G$, then there is a hitting set of size at most $2k$ on the set system $(\mathcal{U}, \mathcal{S})$.
\end{lemma}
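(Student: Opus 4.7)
The plan is a direct constructive reduction. Given a dominating set $D \subseteq V$ with $|D|=k$, define
\[
M := \bigcup_{v \in D} \{\ell_H(\cro(P_v)),\ \ell_V(\cro(P_v))\} \subseteq \mathcal{U},
\]
so that $|M| \leq 2k$. I will show $M$ is a hitting set for $(\mathcal{U}, \mathcal{S})$. Recall that each set in $\mathcal{S}$ is indexed by a cross $\cro(P_w)$ for some $w \in V$, and we must exhibit, for each such $w$, an element of $M$ that hits $\cro(P_w)$.

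I would split into two cases according to whether $w \in D$. If $w \in D$, then the two supporting segments of $\cro(P_w)$ are themselves in $M$, and by the convention stated just before the lemma, a supporting segment of a cross is declared to hit that cross. Hence $M$ intersects the set in $\mathcal{S}$ associated with $\cro(P_w)$. If $w \notin D$, then since $D$ is dominating there exists $u \in D$ with $(u, w) \in E$. By Observation~\ref{obs:crossAndPath}, $\cro(P_u) \cap \cro(P_w) \neq \emptyset$. Writing $\cro(P_u) = \ell_H(\cro(P_u)) \cup \ell_V(\cro(P_u))$, at least one of these two supporting segments meets $\cro(P_w)$, and since that segment is an element of $M$ and it intersects one of the two supporting segments of $\cro(P_w)$, by definition it hits $\cro(P_w)$.

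The step to watch is the second case, because ``$M$ hits $\cro(P_w)$'' was defined via intersection with a \emph{supporting segment} of $\cro(P_w)$, not with the cross as a point set. However, since $\cro(P_w)$ is literally the union of its two supporting segments, any point in $\cro(P_u) \cap \cro(P_w)$ lies on some $\ell \in \{\ell_H(\cro(P_u)), \ell_V(\cro(P_u))\}$ and on some $\ell' \in \{\ell_H(\cro(P_w)), \ell_V(\cro(P_w))\}$, so $\ell \cap \ell' \neq \emptyset$ and thus $\ell \in M$ hits $\cro(P_w)$. Note that the one-string assumption on $G$ is not needed for this direction; it is only the ``only if'' direction of Observation~\ref{obs:crossAndPath} that is used here, translating edges into cross intersections. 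This yields $|M| \leq 2k$ and completes the proof.
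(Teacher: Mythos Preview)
Your proof is correct and follows essentially the same approach as the paper: construct $M$ by taking both supporting segments of $\cro(P_v)$ for each $v$ in the dominating set, then verify each set $S\in\mathcal{S}$ is hit by splitting on whether the corresponding vertex lies in $D$ and invoking Observation~\ref{obs:crossAndPath} in the second case. Your extra care in unpacking why ``$\cro(P_u)\cap\cro(P_w)\neq\emptyset$'' implies some supporting segment of $\cro(P_u)$ intersects a supporting segment of $\cro(P_w)$, and your remark that the one-string hypothesis is not used in this direction, are both valid refinements that the paper's proof leaves implicit.
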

\begin{proof}
\label{prf:lem:ifMDSthenMHS}
Let $M$ be a dominating set of size $k$ for $G$. For each path $P\in M$, add the supporting segments of $\cro(P)$ into $M'$. Clearly, $\lvert M'\rvert=2k$. Now, consider a set $S\in\mathcal{S}$ and let $P$ and $c$ denote its corresponding path and cross, respectively. Since $M$ is a dominating set, either $P\in M$ or $P'\in M$ for some path $P'$ intersecting $P$. If $P\in M$, then clearly $S$ is intersected by some segment in $M'$. If $P'\in M$, then $c$ intersects $\cro(P')$ by Observation~\ref{obs:crossAndPath} and so at least one of the supporting segments of $\cro(P')$ (that is in $M'$) intersects $S$.
\end{proof}

\begin{lemma}
\label{lem:ifMHSthenMDS}
If there is a hitting set of size $k$ on the set system $(\mathcal{U}, \mathcal{S})$, then there is a dominating set of size at most $k$ on $G$.
\end{lemma}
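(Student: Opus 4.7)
The plan is to convert the hitting set $M'\subseteq\mathcal{U}$ of size $k$ into a dominating set $M\subseteq V(G)$ of size at most $k$ by using the natural projection that sends each supporting segment back to the path of which it is part. That is, I would define
\[
M \;:=\; \Set{P\in \text{\pivi}}{\text{at least one supporting segment of } \cro(P) \text{ belongs to } M'}.
\]
Since each element of $\mathcal{U}$ is a supporting segment of exactly one cross, and each cross corresponds to exactly one path, this projection is well-defined and can only shrink the set; hence $|M|\le |M'|=k$. (Equality fails only if both supporting segments of some $\cro(P)$ happen to lie in $M'$, but that only helps the bound.)

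Next I would verify that $M$ dominates $G$. Fix an arbitrary path $P\in$ \pivi, and let $S_P\in\mathcal{S}$ be the set corresponding to $\cro(P)$, i.e., the collection of all hitting elements that hit $\cro(P)$. Because $M'$ is a hitting set, there exists some $e\in M'\cap S_P$. By construction, $e$ is a supporting segment of some cross $\cro(P')$ for a path $P'\in$ \pivi, and $P'\in M$ by our definition. The membership $e\in S_P$ means that $e$ intersects a supporting segment of $\cro(P)$, which implies that the two crosses $\cro(P')$ and $\cro(P)$ intersect. Invoking Observation~\ref{obs:crossAndPath} then gives that $P'$ and $P$ either coincide or are adjacent in $G$; either way $P$ is dominated by $M$.

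The argument is essentially bookkeeping once the set system has been set up correctly, so I do not expect a genuine obstacle here. The only conceptual point worth flagging is that the definition of ``hit'' deliberately treats each supporting segment of $\cro(P)$ as hitting $\cro(P)$ itself; this is precisely what ensures that a path $P$ placed in $M$ dominates itself, covering the corner case $P'=P$ in the argument above. The rest of the proof is a direct appeal to Observation~\ref{obs:crossAndPath} to translate cross-intersection back into adjacency in $G$.
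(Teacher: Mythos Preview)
Your proposal is correct and follows essentially the same approach as the paper: define $M$ by projecting each hitting element back to the path whose cross it supports, bound $|M|\le k$ trivially, and then use Observation~\ref{obs:crossAndPath} to turn the hitting condition into adjacency in $G$. Your handling of the corner case $P'=P$ via the convention that a supporting segment hits its own cross is exactly what the paper relies on as well.
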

\begin{proof}
\label{prf:lem:ifMHSthenMDS}
Suppose that $M'$ is a hitting set of size $k$ for the set system $(\mathcal{U}, \mathcal{S})$. For each hitting element $e\in M'$: add $P$ to $M$, where $P$ is the path for which $e$ is a supporting segment of $\cro(P)$. Clearly, $\lvert M\rvert\leq k$. To see why $M$ is a dominating set for $G$, consider any path $P$ and let $S\in\mathcal{S}$ denote its corresponding set. Since $M'$ is a hitting set, there must be a hitting element $e'\in M'$ that intersects $S$. Let $P'$ be the path for which $e'$ is a supporting segment of $\cor(P')$; notice that $P'\in M$. If $P'=P$, then we are done. Otherwise, $e'$ must hit the cross $c$ corresponding to $P$ since $c\in S$. This means that the cross corresponding to $P'$ intersects $c$ and so $P$ is intersected by $P$ by Observation~\ref{obs:crossAndPath}.
\end{proof}

\paragraph{Approximation using $\varepsilon$-nets.} An {\em $\varepsilon$-net} for a set system $\mathcal{R}=(\mathcal{U}, \mathcal{S})$ is a subset $N$ of $\mathcal{U}$ such that every set $S$ in $\mathcal{S}$ with size at least $\varepsilon\cdot\lvert\mathcal{U}\rvert$ has a non-empty intersection with $N$. Br\"{o}nnimann and Goodrich~\cite{BronnimannG95} introduced the following iterative-doubling approach for turning algorithms for finding $\varepsilon$-nets into approximation algorithms for hitting sets of minimum size. For a given set system $\mathcal{R}=(\mathcal{U},\mathcal{S})$ and any $r>0$, a \emph{net finder} is a polynomial-time algorithm that computes an $(1/r)$-net of $\mathcal{R}$ whose size is at most $s(r)$ for some computable function $s$. Moreover, given a subset $H\subset \mathcal{U}$, a {\em verifier} is a polynomial-time algorithm that outputs whether the set $H$ is a hitting set; if it is not, then the verifier returns a non-empty set $S\in \mathcal{S}$ for which $H$ does not hit $S$.
\begin{theorem}\cite{BronnimannG95}
\label{thm:epsNet}
Let $\mathcal{R}$ be a set system that admits both a polynomial-time net finder and a polynomial-time verifier. Then, there exists a polynomial-time algorithm that computes a hitting set of size at most $s(4 \cdot \textsc{OPT})$, where \textsc{OPT} is the size of a minimum hitting set, and $s(r)$ is the size of the $(1/r)$-net.
\end{theorem}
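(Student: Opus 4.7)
This is the classical Br\"onnimann--Goodrich theorem, so I would reconstruct their multiplicative-weights / iterative-doubling argument. Since $\textsc{OPT}$ is unknown, the outer loop simply guesses $c = 1, 2, 4, 8, \ldots$; for each guess the subroutine below either returns a hitting set of size $s(4c)$ or certifies (by exceeding a prescribed iteration bound) that $c < \textsc{OPT}$. We return the smallest hitting set produced over all guesses; the first guess with $c \ge \textsc{OPT}$ succeeds, so the final output has size at most $s(4\textsc{OPT})$.

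For a fixed guess $c$, maintain an integer weight $w(e) \ge 1$ for every $e \in \mathcal{U}$, initially $w(e) = 1$. At each iteration, form the weighted system in which $e$ appears with multiplicity $w(e)$ (so total ``mass'' $W := \sum_e w(e)$), run the net finder at parameter $r = 4c$ to obtain a $(1/(4c))$-net $N$ of size at most $s(4c)$, and feed $N$ to the verifier. If $N$ is a hitting set, return $N$; otherwise the verifier returns some $S \in \mathcal{S}$ disjoint from $N$, and we double $w(e)$ for every $e \in S$.

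The analysis pits two opposing weight bounds against each other, assuming $c \ge \textsc{OPT}$. On the one hand, since $N$ is a $(1/(4c))$-net and $S \cap N = \emptyset$, the weighted size of $S$ is at most $W/(4c)$, so after doubling $W$ grows by a factor at most $(1+1/(4c))$. Hence after $t$ iterations $W \le n \cdot (1+1/(4c))^t \le n \cdot e^{t/(4c)}$. On the other hand, fix a minimum hitting set $H^*$ with $|H^*| \le c$. Every returned $S$ must contain some element of $H^*$, so by pigeonhole after $t$ iterations some $e^* \in H^*$ has been doubled at least $t/c$ times, giving $w(e^*) \ge 2^{t/c}$. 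Combining these inequalities yields $2^{t/c} \le n \cdot e^{t/(4c)}$, which forces $t = O(c \log n)$. Thus if the subroutine has not terminated within this many iterations we safely conclude $c < \textsc{OPT}$ and proceed to the next guess.

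The main obstacle is calibrating the net parameter so that the two bounds meet: we need the per-iteration growth factor $(1+1/(4c))$ to be small enough that the exponential lower bound $2^{t/c}$ eventually overtakes it, and the constant $4$ in $r = 4c$ is exactly what makes $1/(4c) \cdot \ln 2 > 1/(4c)$ with enough slack, which is the reason the theorem guarantees size $s(4\cdot\textsc{OPT})$ rather than $s(\textsc{OPT})$. Everything else---running the net finder and verifier polynomially many times, and wrapping the subroutine in the outer doubling loop over $c$---is routine and runs in polynomial time by the hypothesis on the net finder and verifier.
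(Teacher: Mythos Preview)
The paper does not prove Theorem~\ref{thm:epsNet} at all: it is stated with a citation to Br\"onnimann and Goodrich and used as a black box. So there is nothing in the paper to compare against; your proposal is a reconstruction of the original Br\"onnimann--Goodrich argument, and that reconstruction is essentially correct.

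Two small calibration remarks are worth making. First, your outer loop doubles $c$, so the first $c \ge \textsc{OPT}$ may be as large as $2\,\textsc{OPT}-1$; with the net parameter set to $r=4c$ this yields a hitting set of size at most $s(8\,\textsc{OPT})$, not $s(4\,\textsc{OPT})$. To match the stated constant you should either increment $c$ by one (still polynomially many guesses, and then the first successful $c$ equals $\textsc{OPT}$), or keep the doubling but take $r=2c$; the weight analysis still goes through because the comparison you need is $(\ln 2)/c > 1/(2c)$, which holds. Second, the sentence ``$1/(4c)\cdot\ln 2 > 1/(4c)$'' is garbled: what you actually need is $(\ln 2)/c > 1/(4c)$, i.e.\ $\ln 2 > 1/4$, which is the inequality that makes the lower bound $2^{t/c}$ eventually outrun the upper bound $n\,e^{t/(4c)}$. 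These are cosmetic fixes; the structure of your argument is the standard one and is sound.
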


Clearly, the hitting set problem corresponding to \mds on $G$ has a polynomial-time verifier. In the following, we compute an $\varepsilon$-net finder of size $O(1/\varepsilon)$ for the set system $(\mathcal{U}, \mathcal{S})$. Then, we prove that this result combined by Lemmas~\ref{lem:ifMDSthenMHS}---\ref{lem:ifMHSthenMDS} and Theorem~\ref{thm:epsNet}, gives an $O(1)$-approximation algorithm for \mds on $G$.

Unfortunately, we were unable to compute an $\varepsilon$-net of size $O(1/\varepsilon)$ directly for $(\mathcal{U},\mathcal{S})$. Instead, we first compute such an $\varepsilon$-net for a ``one-dimensional'' variant of our hitting set problem and then show that re-using such an $\varepsilon$-net twice would result in the desired $\varepsilon$-net. Informally speaking, the idea is that although the \mds problem on $G$ is not ``decomposable'' into horizontal and vertical instances, the $\varepsilon$-net corresponding to our set system $(\mathcal{U},\mathcal{S})$ for this problem is. In the following, we first formulate a one-dimensional variant of our hitting set problem and then show that it admits a small-size $\varepsilon$-net.

Recall the set system $(\mathcal{U}, \mathcal{S})$ corresponding to \mds on $G$ that we constructed just before Lemma~\ref{lem:ifMDSthenMHS}. Let $\mathcal{U}_H$ denote the set of only-horizontal hitting elements of $\mathcal{U}$, and consider the set system $(\mathcal{U}_H, \mathcal{S})$. Representing each cross by its vertical supporting segment only, the minimum hitting set problem on $(\mathcal{U}_H, \mathcal{S})$ reduces to the following problem: given a set of horizontal line segments $\mathcal{H}$ and a set of vertical line segments $\mathcal{V}$, find a minimum-cardinality set $S\subseteq\mathcal{H}$ such that every line segment in $\mathcal{V}$ is intersected by $S$. This problem is known as the \emph{orthogonal segment covering} problem and is known to be \textsc{NP}-complete~\cite{KatzMN05}.
\begin{lemma}
\label{lem:oneDmdsVhitting}
The minimum hitting set problem on $(\mathcal{U}_H, \mathcal{S})$ reduces to the orthogonal segment covering problem.
\end{lemma}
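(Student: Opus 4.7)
The plan is to give a direct polynomial-time reduction by splitting each cross into its horizontal and vertical components. I would set $\mathcal{H} := \mathcal{U}_H = \{\ell_H^P : P\in \pivi\}$ and $\mathcal{V} := \{\ell_V^P : P\in \pivi\}$, and consider the orthogonal segment covering instance asking for a minimum $T \subseteq \mathcal{H}$ that intersects every segment of $\mathcal{V}$. The goal is to show that this instance has exactly the same optimum as the minimum hitting set on $(\mathcal{U}_H, \mathcal{S})$.

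The first step is a pointwise equivalence: for each cross $c$ and each $\ell_H^u \in \mathcal{U}_H$, the element $\ell_H^u$ belongs to the set $S_c \in \mathcal{S}$ corresponding to $c$ (i.e., $\ell_H^u$ hits $c$) if and only if $\ell_H^u$ intersects $\ell_V^c$. The case $u = c$ follows from the construction: since both $\ell_H^c$ and $\ell_V^c$ are extended past $\cor(c)$ by $\delta$, they share the point $\cor(c)$, so in particular $\ell_H^c$ intersects $\ell_V^c$. For $u \neq c$, I would argue that $\ell_H^u$ cannot intersect $\ell_H^c$, so any hit must be through $\ell_V^c$; two distinct horizontal supporting segments can share a point only by lying on a common grid row, which would force the corresponding paths $P_u$ and $P_c$ either to overlap (giving multiple shared grid nodes, contradicting the one-string property via Observation~\ref{obs:crossAndPath}) or to merely touch at an endpoint (explicitly excluded by the definition of \bvpg).

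The second step is to assemble the equivalence at the level of solutions: $T$ hits every $S_c$ iff for every cross $c$ some $\ell_H^u \in T$ intersects $\ell_V^c$, which is exactly the condition that $T$ covers $\mathcal{V}$ by segments from $\mathcal{H}$. This yields a one-to-one correspondence between feasible solutions of the two problems preserving cardinality, so their optima coincide. The construction of $(\mathcal{H}, \mathcal{V})$ from $(\mathcal{U}_H, \mathcal{S})$ is evidently polynomial, completing the reduction.

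The main obstacle I expect is the pointwise-equivalence step, specifically the claim that two distinct horizontal supporting segments never share a point. This argument needs to combine the one-string property with the exclusion of tangential contact in the \bvpg definition, and also to check that the $\delta$-offsets introduced when forming the cross are small enough (recall $\delta = s/4$ where $s$ is the grid edge length) not to generate any new horizontal-horizontal meetings beyond those that would violate one-string. A minor but necessary side remark is that the cross construction is written only for \dchap-type paths; I would note that the analogous constructions for \uchap, \urast, and \drast (and the degenerate no-bend case, treated as \dchap by convention) produce the same structural picture, so the same argument carries through uniformly across all path types.
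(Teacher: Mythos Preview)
Your proposal is correct and follows essentially the same approach as the paper: both arguments hinge on the one-string property to rule out a horizontal hitting element meeting another horizontal supporting segment, so that any hit must occur through the vertical segment, which yields the direct correspondence with orthogonal segment covering. Your version is somewhat more careful than the paper's (you explicitly handle the $u=c$ self-hit case and flag the need to check that the $\delta$-offsets do not create spurious horizontal--horizontal contacts), but the core idea and the structure of the reduction are identical.
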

\begin{proof}
\label{prf:lem:oneDmdsVhitting}
($\Rightarrow$) Let $M$ be a feasible solution for the minimum hitting set problem on $(\mathcal{U}_H, \mathcal{S})$. If a hitting element $e$ hits a cross $c$, then it intersects one of its supporting segments $s$. Let $P$ denote the path in \pivi corresponding to cross $c$. Since $e$ is horizontal, $s$ cannot be horizontal because if $e$ intersects the horizontal supporting segment of $P$ (i.e., $s$), then $\vpart(P)$ and $\hpart(P')$ must intersect each other in more than one point, where $P'$ is the path in \pivi for which $e$ is a supporting segment. This contradicts the fact that $G$ is a one-string \bepg graph. Therefore, $s$ is vertical and so for any feasible solution $M$, we have a feasible solution for the orthogonal segment covering problem with the same size.

($\Leftarrow$) Clearly, a feasible solution to the minimum segment covering problem is also a feasible solution to the minimum hitting set problem with the same size.
\end{proof}

For the orthogonal segment covering problem, Biedl et al.~\cite{BiedlCLMMV16} showed (using the linear union-complexity ideas of Clarkson and Varadarajan~\cite{ClarksonV07}) that there exists an $\varepsilon$-net finder of size $s(1/\varepsilon)\in O(1/\varepsilon)$. Therefore, by Lemma~\ref{lem:oneDmdsVhitting}, we have the following result.
\begin{lemma}
\label{lem:epsilonUH}
There exists an $\varepsilon$-net of size $O(1/\varepsilon)$ for the minimum hitting set problem on $(\mathcal{U}_H, \mathcal{S})$.
\end{lemma}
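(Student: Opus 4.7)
The plan is to derive the result as an immediate corollary of Lemma~\ref{lem:oneDmdsVhitting} together with the known $\varepsilon$-net construction of Biedl et al.~\cite{BiedlCLMMV16} for the orthogonal segment covering problem. Since the lemma statement is about existence, essentially all the work is in showing that the reduction of Lemma~\ref{lem:oneDmdsVhitting} is faithful enough to transport an $\varepsilon$-net from one set system to the other.

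First, I would spell out that the reduction of Lemma~\ref{lem:oneDmdsVhitting} is in fact an isomorphism of set systems (not merely a reduction on optimum values). Explicitly, map each horizontal hitting element $e \in \mathcal{U}_H$ to itself viewed as a horizontal segment in $\mathcal{H}$, and map each set $S \in \mathcal{S}$ (corresponding to a cross $c$ with vertical supporting segment $\ell_V$) to the vertical segment $\ell_V \in \mathcal{V}$. The forward direction in the proof of Lemma~\ref{lem:oneDmdsVhitting} shows that under the one-string property of $G$, a horizontal hitting element $e$ belongs to $S$ precisely when $e$ intersects $\ell_V$; the backward direction records that any orthogonal-intersection witness already corresponds to a hit in $(\mathcal{U}_H, \mathcal{S})$. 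Consequently, the incidence relation on $(\mathcal{U}_H, \mathcal{S})$ and the incidence relation of the orthogonal segment covering instance are identical.

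Next, I would invoke the theorem of Biedl et al.~\cite{BiedlCLMMV16}, who use the linear union-complexity machinery of Clarkson and Varadarajan~\cite{ClarksonV07} to build, for orthogonal segment covering, an $\varepsilon$-net of size $O(1/\varepsilon)$ in polynomial time. Because $\varepsilon$-nets depend only on the abstract set-system structure (cardinalities of sets, ground set size, and membership), any isomorphism transports an $\varepsilon$-net of size $k$ to an $\varepsilon$-net of the same size on the target system. Applying this to our isomorphism yields an $\varepsilon$-net of size $O(1/\varepsilon)$ for $(\mathcal{U}_H,\mathcal{S})$, proving the lemma.

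The only mild obstacle I anticipate is checking that the statement in \cite{BiedlCLMMV16} matches the unweighted, abstract set-system definition of $\varepsilon$-net used here (as opposed to, say, a weighted variant tailored to Br\"{o}nnimann--Goodrich); this is a routine verification since the standard formulation of the Clarkson--Varadarajan framework directly outputs an $\varepsilon$-net in the sense needed by Theorem~\ref{thm:epsNet}. Otherwise, the argument is a one-line transport through Lemma~\ref{lem:oneDmdsVhitting} and no further combinatorial work is required.
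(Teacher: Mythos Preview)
Your proposal is correct and follows essentially the same approach as the paper, which derives the lemma in one sentence as an immediate consequence of Lemma~\ref{lem:oneDmdsVhitting} and the $O(1/\varepsilon)$ $\varepsilon$-net result of Biedl et al.~\cite{BiedlCLMMV16} for orthogonal segment covering. Your added observation that the reduction of Lemma~\ref{lem:oneDmdsVhitting} is actually a set-system isomorphism (so that $\varepsilon$-nets, not just optimum values, transfer) is a welcome clarification of what the paper leaves implicit.
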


\begin{lemma}
\label{lem:approxMHSonU}
There exists a polynomial-time $O(1)$-approximation algorithm for the minimum hitting set problem on $(\mathcal{U},\mathcal{S})$.
\end{lemma}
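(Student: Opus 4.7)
The plan is to lift the one-dimensional $\varepsilon$-net guaranteed by Lemma~\ref{lem:epsilonUH} to an $\varepsilon$-net of size $O(1/\varepsilon)$ for the full set system $(\mathcal{U},\mathcal{S})$, and then feed this into the Br\"{o}nnimann--Goodrich framework (Theorem~\ref{thm:epsNet}) to obtain an $O(1)$-approximation for the minimum hitting set instance, which in turn (by Lemmas~\ref{lem:ifMDSthenMHS} and \ref{lem:ifMHSthenMDS}) is an $O(1)$-approximation for \mds on $G$.

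First, by a symmetric argument to the one used in Lemma~\ref{lem:oneDmdsVhitting} and Lemma~\ref{lem:epsilonUH} (exchanging the roles of the horizontal and vertical supporting segments, and hence of the one-string property restriction used inside Lemma~\ref{lem:oneDmdsVhitting}), I would observe that the minimum hitting set problem on $(\mathcal{U}_V,\mathcal{S})$ also reduces to orthogonal segment covering and therefore admits an $\varepsilon$-net of size $O(1/\varepsilon)$ via the Biedl~et~al./Clarkson--Varadarajan machinery. So we have polynomial-time net finders of size $O(1/\varepsilon)$ for each of $(\mathcal{U}_H,\mathcal{S})$ and $(\mathcal{U}_V,\mathcal{S})$.

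Next I would combine them. Given $\varepsilon>0$, let $N_H$ and $N_V$ be $(\varepsilon/2)$-nets for $(\mathcal{U}_H,\mathcal{S})$ and $(\mathcal{U}_V,\mathcal{S})$ respectively, and set $N := N_H \cup N_V \subseteq \mathcal{U}$. For any $S\in\mathcal{S}$ with $|S|\geq \varepsilon\,|\mathcal{U}|$, decompose $S = (S\cap\mathcal{U}_H)\cup(S\cap\mathcal{U}_V)$; by pigeonhole one of these pieces has size at least $\tfrac{\varepsilon}{2}\,|\mathcal{U}|\geq\tfrac{\varepsilon}{2}\,|\mathcal{U}_H|$ (or $\tfrac{\varepsilon}{2}\,|\mathcal{U}_V|$), so the corresponding one-dimensional net hits $S$. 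Thus $N$ is an $\varepsilon$-net for $(\mathcal{U},\mathcal{S})$ of total size $O(1/\varepsilon)$. A polynomial-time verifier is trivial---scan every cross and report any set not hit by the candidate. Plugging $s(r)=O(r)$ into Theorem~\ref{thm:epsNet} then yields a polynomial-time algorithm returning a hitting set of size $O(\textsc{OPT})$, i.e.\ an $O(1)$-approximation.

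The main obstacle I anticipate is conceptual rather than technical: it is not obvious \emph{a priori} that the hitting set problem decomposes along the horizontal/vertical split, because a single cross may be hit by elements of either type and an optimal solution may freely mix them. The key observation that rescues the argument is that the $\varepsilon$-net definition only requires hitting \emph{large} sets, so a pigeonhole on $|S\cap\mathcal{U}_H|$ versus $|S\cap\mathcal{U}_V|$ reduces the ``two-dimensional'' $\varepsilon$-net task to two ``one-dimensional'' instances and loses only a factor of $2$ in $\varepsilon$, which is absorbed into the $O(1/\varepsilon)$ bound. This is the sense in which, as advertised earlier in the section, the $\varepsilon$-net---though not the problem itself---is decomposable.
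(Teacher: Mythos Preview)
Your proposal is correct and follows essentially the same approach as the paper: take $(\varepsilon/2)$-nets $N_H$ and $N_V$ for the horizontal and vertical subsystems, set $N=N_H\cup N_V$, and use pigeonhole on $|S\cap\mathcal{U}_H|$ versus $|S\cap\mathcal{U}_V|$ to show that $N$ is an $\varepsilon$-net for $(\mathcal{U},\mathcal{S})$, then invoke Theorem~\ref{thm:epsNet}. Your write-up is in fact slightly more careful than the paper's in making explicit the inequality $\tfrac{\varepsilon}{2}|\mathcal{U}|\geq\tfrac{\varepsilon}{2}|\mathcal{U}_H|$ needed to trigger the one-dimensional net.
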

\begin{proof}
\label{prf:lem:approxMHSonU}
To prove the lemma, it suffices by Theorem~\ref{thm:epsNet} to compute an $\varepsilon$-net of size $O(1/\varepsilon)$ for $(\mathcal{U},\mathcal{S})$. Define the set system $(\mathcal{U}_V, \mathcal{S})$ similar to $(\mathcal{U}_H, \mathcal{S})$; that is, $\mathcal{U}_V$ is the set of only-vertical hitting elements of $\mathcal{U}$. By Lemma~\ref{lem:epsilonUH}, let $N_H$ (resp., $N_V$) be an $(\varepsilon/2)$-net of size $O(1/\varepsilon)$ for the corresponding one-dimensional variant of the hitting set problem on $(\mathcal{U}_H, \mathcal{S})$ (resp., $(\mathcal{U}_V, \mathcal{S})$). Let $N:=N_H\cup N_V$. Clearly, $N$ has size $O(1/\varepsilon)$. We now prove that $N$ is an $\varepsilon$-net.

Let $S\in\mathcal{S}$ such that $\lvert S\rvert\geq\varepsilon\cdot\lvert\mathcal{U}\rvert$. We need to show that $N\cap S\neq\emptyset$. Notice that having $\lvert S\rvert\geq\varepsilon\cdot\lvert\mathcal{U}\rvert$ means that there exists a cross $c$ that is hit by $\varepsilon\cdot\lvert\mathcal{U}\rvert$ hitting elements. Assume w.l.o.g. that at least half of these hitting elements are horizontal. Then, the vertical supporting segment $s$ of $c$ intersects at least $\varepsilon\cdot\lvert\mathcal{U}/2\rvert$ horizontal hitting elements of $\mathcal{U}$. By definition of an $(\varepsilon/2)$-net, there is a hitting element $e\in N_H$ that intersects $s$. Therefore, $e$ hits cross $c$ and so $e\in N\cap S$.
\end{proof}

Putting everything together, we can prove the main result of this section.
\begin{theorem}
\label{thm:approxMDSonVPG}
There exists a polynomial-time $O(1)$-approximation algorithm for \mds on any one-string \bvpg graph.
\end{theorem}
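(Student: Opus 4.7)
The plan is to chain together the three key structural lemmas already established, plus the Br\"onnimann--Goodrich framework. First, given the one-string \bvpg graph $G$ with representation $\langle\mathcal{P}_{\vtx},\mathcal{G}\rangle$, I would explicitly construct the set system $(\mathcal{U},\mathcal{S})$ in polynomial time: for each path $P\in\mathcal{P}_{\vtx}$ build the cross $\cro(P)$, insert its two supporting segments into $\mathcal{U}$, and create the set $S_P\in\mathcal{S}$ consisting of those hitting elements of $\mathcal{U}$ that meet a supporting segment of $\cro(P)$. Since $|\mathcal{U}|=2n$ and $|\mathcal{S}|=n$ and all supporting segments are easy to read off from the input representation, this preprocessing is polynomial.

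Second, I would invoke Lemma~\ref{lem:approxMHSonU} to compute, in polynomial time, a hitting set $M'$ for $(\mathcal{U},\mathcal{S})$ of size at most $c\cdot\opths$ for some absolute constant $c$, where $\opths$ is the minimum hitting set size on $(\mathcal{U},\mathcal{S})$. By Lemma~\ref{lem:ifMDSthenMHS} applied to a minimum dominating set of $G$ of size $\optds$, we have $\opths\leq 2\cdot\optds$, so $|M'|\leq 2c\cdot\optds=O(\optds)$.

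Third, I would translate $M'$ back into a dominating set $M$ of $G$ via the construction used in the proof of Lemma~\ref{lem:ifMHSthenMDS}: for each hitting element $e\in M'$, put into $M$ the unique path $P\in\mathcal{P}_{\vtx}$ having $e$ as a supporting segment of $\cro(P)$. Lemma~\ref{lem:ifMHSthenMDS} then guarantees that $M$ is a dominating set of $G$ with $|M|\leq|M'|$, yielding $|M|=O(\optds)$ and hence an $O(1)$-approximation, computed in polynomial time because all three stages run in polynomial time.

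At this stage there is no real obstacle in the theorem itself; the heavy lifting was done in Lemma~\ref{lem:approxMHSonU}, which required establishing that although the \mds problem on $G$ does not decompose into a horizontal and a vertical instance, the $\varepsilon$-net for $(\mathcal{U},\mathcal{S})$ does (via the pigeonhole step that a majority of the $\varepsilon\cdot|\mathcal{U}|$ hitters of a large set $S$ are either all horizontal or all vertical), combined with the linear-union-complexity $\varepsilon$-net of Biedl et al.\ for orthogonal segment covering. The final theorem is simply the synthesis of these pieces.
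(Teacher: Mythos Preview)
Your proposal is correct and follows essentially the same approach as the paper's own proof: construct the set system $(\mathcal{U},\mathcal{S})$, invoke Lemma~\ref{lem:approxMHSonU} to get a hitting set within a constant factor of $\opths$, use Lemma~\ref{lem:ifMDSthenMHS} to bound $\opths\le 2\,\optds$, and then convert back via Lemma~\ref{lem:ifMHSthenMDS} to obtain a dominating set of size at most $2c\cdot\optds$. The paper presents exactly this chain of inequalities, so there is no meaningful difference between the two arguments.
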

\begin{proof}
Let $G$ be any one-string \bvpg graph with its string representation $\langle$\pivi$,\mathcal{G}\rangle$. We show how to obtain a constant-factor approximation solution for \mds on $G$ in polynomial time and so proving the theorem.

First, compute the set system $(\mathcal{U},\mathcal{S})$ corresponding to \pivi as described above. Let $\optds$ (resp., $\opths$) denote an optimal solution for \mds on $G$ (resp., for the minimum hitting set problem on $(\mathcal{U},\mathcal{S})$). By Lemma~\ref{lem:ifMDSthenMHS}, we know that $\lvert\opths\rvert\leq 2\lvert\optds\rvert$. By Lemma~\ref{lem:approxMHSonU}, let $S_{HS}$ be a constant-factor approximation to the minimum hitting set problem on $(\mathcal{U},\mathcal{S})$; that is, $\lvert S_{HS}\rvert\leq c\cdot\lvert \opths\rvert$ for some constant $c>0$. Apply Lemma~\ref{lem:ifMHSthenMDS} to $S_{HS}$ and let $S_{DS}$ be a feasible solution for \mds on $G$; notice that $\lvert S_{DS}\rvert\leq\lvert S_{HS}\rvert$. The set system $(\mathcal{U},\mathcal{S})$, $S_{HS}$ and $S_{DS}$ can each be computed in polynomial time. Therefore,
\begin{equation*}
\frac{\lvert S_{DS}\rvert}{\lvert\optds\rvert}\leq\frac{\lvert S_{HS}\rvert}{\lvert\optds\rvert}\leq \frac{2\lvert S_{HS}\rvert}{\lvert\opths\rvert}\leq 2c.
\end{equation*}

That is, $S_{DS}$ is an $O(1)$-approximation solution for \mds on $G$, which can be computed in polynomial time.
\end{proof}

\subsection{\bepg Graphs}
\label{subsec:MDSonBEPG}
In this section, we consider the \mds problem on \bepg graphs. We first show that the problem is \apx on \bepg graphs, even if only two types of paths are allowed in the graph; hence, ruling out the possibility of a PTAS for this problem unless \textsc{P=NP}. Then, we give $c$-approximation algorithms for this problem on two subclasses of \bepg graphs, for $c\in\{2, 3\}$.

\paragraph{\textsc{APX}-hardness.} To show the \textsc{APX}-hardness, we give an \textsc{L}-reduction from the \mvc problem on graphs with maximum-degree three to the \mds on \bepg graphs; \mvc is known to be \apx on graphs with maximum-degree three~\cite{AlimontiK00}. Before describing the reduction, let us give a formal definition of \textsc{L}-reduction~\cite{PapadimitriouY91} as a reminder. Let $\Pi$ and $\Pi'$ be two optimization problems with the cost functions $c_\Pi(.)$ and $c_{\Pi'}(.)$, respectively. We say that \emph{$\Pi$ $L$-reduces to $\Pi'$} if there are two polynomial-time computable functions $f$ and $g$ such that the followings hold. \begin{inparaenum} \item For any instance $x$ of $\Pi$, $f(x)$ is an instance of $\Pi'$. \item If $y$ is a solution to $f(x)$, then $g(y)$ is a solution for $x$. \item There exists a constant $\alpha>0$ such that $OPT_{\Pi'}(f(x))\leq\alpha OPT_\Pi(x)$, where $OPT_Y(x)$ denotes the cost of an optimal solution for problem $Y$ on its instance $x$. \item There exists a constant $\beta>0$ such that for every solution $y$ for $f(x)$,
\[
|OPT_\Pi(x)-c_\Pi(g(y))|\leq \beta |OPT_{\Pi'}(f(x))-c_{\Pi}(y)|,
\]
where $|x|$ denotes the absolute value of $x$. \end{inparaenum}

\begin{lemma}
\label{lem:mdsApx}
\mvc on graphs with maximum-degree three is \textsc{L}-reducible to \mds on \bepg graphs.
\end{lemma}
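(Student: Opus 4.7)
My plan is an L-reduction from \mvc on graphs of maximum degree three (shown \apx in~\cite{AlimontiK00}) to \mds on \bepg graphs. Let $H=(V,E)$ be a max-degree-3 graph; I will assume $H$ has no isolated vertex, since isolated vertices contribute identically to both sides of the reduction and can be handled by an additive term.

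Combinatorial construction. First I would define the target graph $G$ abstractly on vertex set $\{P_v:v\in V\}\cup\{w_e:e\in E\}$, with edges $\{P_uP_v : uv\in E\}$ (so that $G$ contains $H$ as an induced subgraph on the $P_v$'s) together with $\{P_uw_e, P_vw_e : e=uv\in E\}$; that is, $G$ is $H$ augmented by a new degree-two vertex $w_e$ attached to both endpoints of every edge. I would then verify that $\mathrm{MDS}(G)=\mathrm{MVC}(H)$, giving the L-reduction with $\alpha=\beta=1$. Indeed, given a vertex cover $C$ of $H$, the set $D:=\{P_v:v\in C\}$ is a dominating set of $G$ of size $|C|$: every $P_v$ with $v\notin C$ has a neighbour $P_u\in D$ through an original edge of $H$ (since $C$ covers every edge at $v$), and every $w_e$ is dominated by whichever endpoint of $e$ lies in $C$. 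Conversely, given a dominating set $D$ of $G$, the map $g(D):=\{v:P_v\in D\}\cup\{u_e:w_e\in D\}$, with $u_e$ an arbitrary endpoint of $e$, yields a vertex cover of $H$ of size at most $|D|$: for every $e=uv\in E$ the gadget $w_e$ is dominated in $G$, forcing $P_u$, $P_v$, or $w_e$ into $D$ and hence an endpoint of $e$ into $g(D)$.

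Geometric realisation. The technical core is then to exhibit $G$ as a \bepg graph whose representation uses only two path types. My plan is an explicit construction with two L-shape orientations, say $\llcorner$ for the vertex paths and $\urcorner$ for the gadget paths (degenerate straight paths are folded into $\llcorner$-type by the paper's convention). I would enumerate $V$ as $v_1,\dots,v_n$, place each $P_{v_i}$ as an $\llcorner$-shape whose corner lies at a dedicated grid point with row and column coordinates monotone in $i$, and tune the arm-lengths so that for every $v_iv_j\in E$ the two vertex paths share a single \emph{private} grid edge $g_{ij}$ not contained in any other $P_w$. Since $\Delta(H)\le 3$, each vertex path needs at most three such meeting edges distributed between its two arms, and they can be laid out along a staircase pattern driven by a fixed orientation of $E(H)$ without conflict. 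The gadget $w_e$ is then realised as a minimal $\urcorner$-path covering exactly $g_e$, and is therefore adjacent in $G$ precisely to $P_u$ and $P_v$.

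The main obstacle I expect is the verification that the staircase produces no spurious adjacencies: one must check that $P_{v_i}$ and $P_{v_k}$ with $v_iv_k\notin E$ never share a grid edge, and that each $w_e$ does not touch any $P_w$ with $w\notin e$ nor interact badly with other gadget paths. This is controlled by spacing the rows and columns on different parities, by cutting every L-arm precisely after its last designated meeting edge, and by a careful choice of which neighbours are served on the horizontal arm versus the vertical arm. Once the representation is in place, the combinatorial correspondence above gives the L-reduction immediately, establishing that \mds is \apx on \bepg graphs even under the restriction to two path types.
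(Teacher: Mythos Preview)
Your combinatorial reduction is clean and would even give the best possible constants $\alpha=\beta=1$: the graph $G$ obtained from $H$ by attaching a new degree-two vertex $w_e$ to the two endpoints of each edge $e$ indeed satisfies $\mathrm{MDS}(G)=\mathrm{MVC}(H)$, and your forward and backward maps are correct.

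The gap is entirely in the geometric realisation. As you describe it---each $P_{v_i}$ an $\llcorner$-path with corner at a dedicated grid point whose row and column coordinates are both monotone in $i$---no two vertex paths can ever share a grid edge: the horizontal arms lie on pairwise distinct rows, the vertical arms on pairwise distinct columns, and a horizontal segment can meet a vertical segment in at most a single grid \emph{point}, never a grid edge. So ``tuning arm-lengths'' cannot manufacture the adjacencies $P_{u}P_{v}$ that your reduction needs; your staircase produces the empty graph on the vertex paths. More fundamentally, since your $G$ contains $H$ as an induced subgraph, realising $G$ in \bepg would in particular realise an arbitrary maximum-degree-$3$ graph in \bepg using a single path shape---a nontrivial structural claim that you have not established and that does not follow from anything cited in the paper.

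The paper's reduction avoids precisely this obstacle by \emph{not} embedding $H$. Each vertex $u_i$ is represented by two long paths $\Gamma^h_i,\Gamma^v_i$ together with three connector paths, and for each edge $(u_i,u_j)$ two tiny mutually non-adjacent paths are placed at the crossing point of $\Gamma^v_i$ and $\Gamma^h_j$; crucially $\Gamma^v_i$ and $\Gamma^h_j$ themselves share only that grid point and are therefore non-adjacent, so the vertex paths do \emph{not} induce $H$. The need to dominate both paths of the edge gadget is what forces one of $\Gamma^v_i,\Gamma^h_j$ into the dominating set and thereby encodes the cover. The price is a larger instance with optimum $n+k$ and hence $\alpha=5$, but the \bepg layout is then immediate to draw.
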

\begin{proof}
\label{prf:lem:mdsApx}
Consider an arbitrary instance $I$ of \mvc on graphs of maximum-degree three; let $G=(V, E)$ be the graph corresponding to $I$ and let $k$ be size of the smallest vertex cover in $G$. First, let $u_1, \dots, u_n$ be an arbitrary ordering of the vertices of $G$, where $n=\lvert V\rvert$. In the following, we give a computable function $f$ that takes $I$ as input and outputs an instance $f(I)$ of \mds in polynomial time, where $f(I)$ consists of a \bepg graph such that its paths are of either \uchap-type or \drast-type only.

We first describe the vertex gadgets. For each vertex $u_i$, $1\leq i\leq n$, construct a horizontal \uchap-type $\Gamma^h_i$ and a vertical \uchap-type $\Gamma^v_i$, and connect them as shown in Figure~\ref{fig:mdsApx}. We call the big \uchap-type path used in the connection of $\Gamma^h_i$ and $\Gamma^v_i$ the \emph{big connector} $C_i$ of $i$, and the two small (blue, dashed) \uchap-type paths the \emph{small connectors} of $i$. For each edge $(u_i, u_j)\in E$, where $i<j$, we add two small paths, one of \uchap-type and one of \drast-type, at the intersection point of $\Gamma^v_i$ and $\Gamma^h_j$ such that each of them becomes adjacent to both $\Gamma^v_i$ and $\Gamma^h_j$; see the two (red, dash-dotted) \uchap-type and \drast-type paths at the intersection of $\Gamma^v_1$ and $\Gamma^h_2$ in Figure~\ref{fig:mdsApx}. We denote this pair of paths by $E_{i,j}$; notice that the paths of $E_{i,j}$ are not adjacent to each other (they only share a grid node). This gives the instance $f(I)$ of \mds on \bepg graphs; let $G'$ be the corresponding \bepg graph. Notice that $f$ is a polynomial-time computable function. In the following, we denote an optimal solution for the instance $X$ of a problem by $s^*(X)$. We now prove that all the four conditions of $L$-reduction hold.

First, let $M$ be a vertex cover of $G$ of size $k$. Denote by $\Gamma^h[M] := \{\Gamma^h_i | u_i\in M\}$ the set of horizontal paths induced by $M$ and define $\Gamma^v[M]$ analogously. Moreover, let $C[M] := \{C_i | u_i \notin M\}$ be the set of big connectors whose corresponding vertex is not in $M$. We show that $D := \Gamma^h[M]\cup \Gamma^v[M]\cup C[M]$ is a dominating set of $G'$. Let $\gamma$ be a path. If $\gamma$ is any of the paths in $E_{i,j}$ for some $i,j$, then $u_i\in M$ or $u_j\in M$ because $M$ is a vertex cover; assume w.l.o.g. that $u_i\in M$. Then, $\Gamma^h_i, \Gamma^v_i\in D$ and so $\gamma$ must be dominated. If $\gamma$ is a connector of $i$ for some $i$ (either big or small), then there are two cases: if $u_i\in M$, then $\Gamma^h_i, \Gamma^v_i\in D$ and so the connector is dominated. If $u_i\notin M$, then $C_i\in C[M]$ and $C[M]\subseteq D$; hence, the connector is again dominated. Finally, suppose that $\gamma$ is $\Gamma^h_i$ (resp., $\Gamma^v_i$) for some $i$. If $u_i\in M$, then $\Gamma^h_i, \Gamma^v_i\in D$ and so $\Gamma^h_i$ (resp., $\Gamma^v_i$) is dominated. If $u_i\notin M$, then $C_i\in C[M]$ and $C[M]\subseteq D$; hence, $\Gamma^h_i$ (resp., $\Gamma^v_i$) is again dominated. This shows that $D$ is a dominating set for $G'$.

\begin{figure}[t]
\centering
\includegraphics[width=0.50\textwidth]{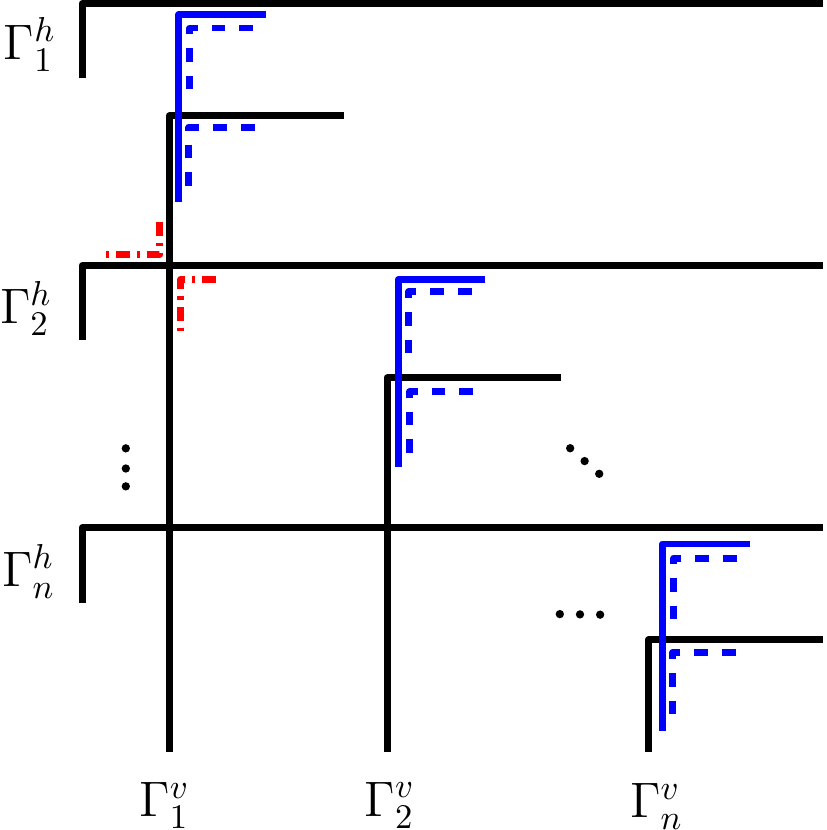}
\caption{An illustration in support of the construction in the proof of Lemma~\ref{lem:mdsApx}.}
\label{fig:mdsApx}
\end{figure}

Second, let $D$ be an arbitrary dominating set on $G'$. First, notice that we can construct a dominating set $D'$ for $G'$ such that $\lvert D'\rvert\leq\lvert D\rvert$ and $D'$ consists of only $\Gamma^h_i$ and $\Gamma^v_i$ for some $i$ or a big connector. This is because \begin{inparaenum}[(i)] \item any path dominated by a small connector is also dominated by some big connector, and \item\label{inl:domiantingEij} any path dominated by a path from $E_{i,j}$ is also dominated by $\Gamma^v_i$ or $\Gamma^h_j$. \end{inparaenum} For~\eqref{inl:domiantingEij}, in particular, if exactly one of the paths in $E_{i,j}$ is in $D$, then we can replace it with one of $\Gamma^v_i$ or $\Gamma^h_j$ arbitrarily. Moreover, if both paths in $E_{i,j}$ are in $D$, then we can replace both of them with $\Gamma^v_i$ and $\Gamma^h_j$. As such, $\lvert D'\rvert\leq\lvert D\rvert$ and $D'$ is a feasible dominating set for $G'$. Now, define $\Gamma_{\both}[D'] = \{\Gamma^h_i,\Gamma^v_i | \Gamma^h_i,\Gamma^v_i \in D'\}$; i.e., the paths of a vertex $u_i$, where \emph{both} its horizontal and its vertical copies appear in $D'$. Also, define $\Gamma_{\one}[D']$ to be the remaining paths of type $\Gamma^h_i$ and $\Gamma^v_i$; i.e., those of $u_i$, where \emph{only} one of its copies appears in $D'$. Finally, let $C[D']$ be the set of big connectors in $D'$. We denote $\Gamma_{\both}[D']\cup\Gamma_{\one}[D']$ by $\Gamma[D']$. Now, let $M := \{u_i | \Gamma^h_i\in D' \mbox{ or } \Gamma^v_i\in D'\}$. Since all $E_{i,j}$ are dominated by $\Gamma[D']$, $M$ is a vertex cover.

Third, observe that $\lvert\Gamma^h[M]\rvert=\lvert\Gamma^v[M]\rvert=\lvert M\rvert=k$ and also $\lvert C[M]\rvert=n-k$. Given that $G$ has degree three, $k\geq n/4$ and so $\lvert s^*(f(I))\rvert\leq n-k+k+k\leq 5k\leq 5\lvert s^*(I)\rvert$.

Finally, to dominate all connectors of $i$, we must have $C_i\in D'$ or $\Gamma^h_i, \Gamma^v_i\in D'$; this indeed holds for all $i$. Thus, $\lvert C[D']\rvert+\lvert \Gamma_{\both}[D']\rvert/2\geq n$. Moreover, $\lvert\Gamma_{\one}[D']\rvert+\lvert\Gamma_{\both}[D']\rvert/2\geq k$ since $M$ is a vertex cover of $G$. Therefore, $\lvert D'\rvert\geq\lvert\Gamma_{\both}[D']\rvert+\lvert\Gamma_{\one}[D']\rvert+\lvert C[D']\rvert\geq\lvert\Gamma_{\one}[D']\rvert+\lvert\Gamma_{\both}[D']\rvert/2+n\geq k+n$. By this and our earlier inequality $\lvert s^*(f(I))\rvert\leq n-k+k+k$, we have $\lvert s^*(f(I))\rvert=n+k$. Now, suppose that $\lvert D\rvert=\lvert s^*(f(I))\rvert+c$ for some $c\geq 0$. Then,
\begin{align*}
& \lvert D\rvert-\lvert s^*(f(I))\rvert=c\\
& \Rightarrow \lvert D\rvert-(n+k)=c\\
& \Rightarrow \lvert D'\rvert-(n+k)\leq c\\
& \Rightarrow \lvert\Gamma_{\one}[D']\rvert+\lvert\Gamma_{\both}[D']\rvert/2+n-(n+k)\leq c\\
& \Rightarrow \lvert\Gamma_{\one}[D']\rvert+\lvert\Gamma_{\both}[D']\rvert/2-k\leq c\\
& \Rightarrow \lvert M\rvert-\lvert s^*(I)\rvert\leq c.
\end{align*}
That is, $\lvert M\rvert-\lvert s^*(I)\rvert\leq\lvert D\rvert-\lvert s^*(f(I))\rvert$. This concludes our \textsc{L}-reduction from \mvc on graphs of maximum-degree three to \mds on \bepg graphs with $\alpha=5$ and $\beta=1$.
\end{proof}

Our reduction reveals that every path in the constructed \bepg graph $G'$ is a \uchap-type or a \drast-type path. Therefore, by Lemma~\ref{lem:mdsApx}, we have the following theorem.
\begin{theorem}
\label{thm:mdsApx}
The \mds problem is \apx on \bepg graphs, even if all the paths in the graph are \texttt{x}-type paths, for some $\texttt{x}\in\{$\uchap, \drast$\}$. Thus, there exists no PTAS for this problem on \bepg graphs unless \textsc{P=NP}.
\end{theorem}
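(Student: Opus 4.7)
The plan is to derive Theorem~\ref{thm:mdsApx} directly from Lemma~\ref{lem:mdsApx}. That lemma already exhibits an \textsc{L}-reduction from \mvc on graphs of maximum degree three to \mds on \bepg graphs, so the two remaining ingredients are standard: first, that \mvc is \apx on graphs of maximum degree three~\cite{AlimontiK00}; second, that \textsc{L}-reductions transfer \textsc{APX}-hardness from the source problem to the target problem~\cite{PapadimitriouY91}. Combining these two facts with Lemma~\ref{lem:mdsApx} immediately yields the \textsc{APX}-hardness of \mds on \bepg graphs, and the non-existence of a PTAS unless \textsc{P}=\textsc{NP} follows from the standard equivalence between being \apx and the non-existence of a PTAS under \textsc{P}$\neq$\textsc{NP}.

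To obtain the strengthened statement that \textsc{APX}-hardness already holds even when the input \bepg graph is restricted to paths of type in $\{$\uchap, \drast$\}$, I would simply audit the construction of $G'$ produced by $f$ in the proof of Lemma~\ref{lem:mdsApx} and confirm that no other path type is ever introduced. As explicitly stated in the paragraph preceding the theorem, the vertex gadget contributes only \uchap-type paths (namely $\Gamma^h_i$, $\Gamma^v_i$, the big connector $C_i$, and the two small connectors), while each edge gadget $E_{i,j}$ contributes exactly one \uchap-type and one \drast-type path. Hence the same reduction $f$ certifies \textsc{APX}-hardness under the restricted path-type assumption, giving the full theorem.

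There is no substantive mathematical obstacle to overcome in this step: once Lemma~\ref{lem:mdsApx} is in hand, the theorem is a two-line corollary. The only point that requires any care is the path-type bookkeeping---i.e., verifying that the reduction has not inadvertently used a path of type $|$, $-$, \dchap, or \urast\ anywhere in $G'$---and this has already been handled by design in the proof of Lemma~\ref{lem:mdsApx}, as may be seen directly from Figure~\ref{fig:mdsApx}.
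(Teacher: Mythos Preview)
Your proposal is correct and follows essentially the same approach as the paper: the paper likewise derives Theorem~\ref{thm:mdsApx} as an immediate corollary of Lemma~\ref{lem:mdsApx}, noting only that the constructed graph $G'$ uses exclusively \uchap-type and \drast-type paths and then invoking the known \textsc{APX}-hardness of \mvc on cubic graphs together with the fact that \textsc{L}-reductions preserve \textsc{APX}-hardness.
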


\paragraph{Approximation algorithms.} In this section, we give constant-factor approximation algorithms for the \mds problem on two subclasses of \bepg graphs. Let us first define these subclasses.

First, we consider a subclass of \bepg graphs in which every path of each \bepg graph intersects two axis-parallel lines that are normal to each other. Recall that this is the variant for which Lahiri et al.~\cite{LahiriMS15} gave an exact solution for the \mis problem when the input graph is a \bvpg graph: they showed that the induced graph is a co-comparability graph and so solved the \mis problem exactly. However, the graph induced by this variant when considering \bepg graphs is not necessarily a co-comparability graph; this is mainly because two paths intersecting in only one point in a \bepg graph are not adjacent. Here, we give a 2-approximation algorithm for this problem on such \bepg graphs. We call this subclass, the class of \textsc{Double-Crossing} \bepg graphs. Next, we consider a less-restricted subclass of \bepg graphs in which every path of each \bepg graph intersects only a vertical line. We show that the same algorithm is a 3-approximation algorithm for the problem on this subclass of \bepg graphs, albeit considering a ``non-containment'' assumption. We call this subclass, the class of \textsc{Vertical-Crossing} \bepg graphs.

Before describing the algorithms, let us define an ordering $\prec$ on the paths in \piei as follows. The paths appear in the ordering by the $y$-coordinate of their corners from bottom to top and then from left to right whenever they have the same $y$-coordinate; that is, $P\prec P'$ for two paths in \piei, if and only if $y(\cor(P))<y(\cor(P'))$ or $y(\cor(P))=y(\cor(P'))$ but $x(\cor(P))<x(\cor(P'))$. Notice that this ordering is well defined by out weak general-position assumption. For the rest of this section, we assume that every path in \piei is a \dchap-type path.

\begin{wrapfigure}{r}{0.35\textwidth}
\centering
\includegraphics[scale=0.8]{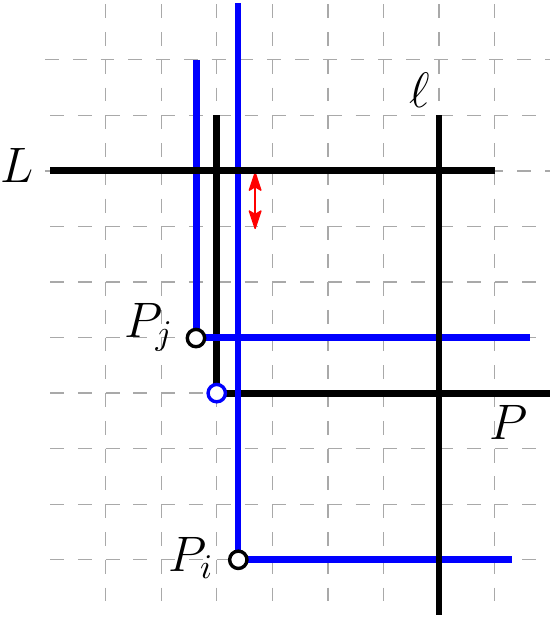}
\label{fig:dcMDS}
\end{wrapfigure}

\paragraph{Double-crossing \bepg graphs.}
Recall that in a \textsc{Double-Crossing} \bepg graph, we are given a \bepg graph $G$, a horizontal line $L$ and a vertical line $\ell$ both on the grid $\mathcal{G}$ such that $L$ and $\ell$ intersect each other and $P$ intersects both $L$ and $\ell$ for all $P\in$\piei (hence, $\cor(P)$ lies in the lower-left quadrant defined by $L$ and $\ell$). Our 2-approximation algorithm for the \mds problem is as follows; let $S$ be an initially-empty set. For each path $P$ in the increasing order $\prec$: add $P$ into $S$ and set \piei$:=$\piei$\setminus N[P]$. See Algorithm~\ref{alg:approxLineMDS}. Clearly, the algorithm terminates in time polynomial in $\lvert$\piei$\rvert$, and $S$ is a feasible solution for the problem. To see the approximation factor, let $OPT$ be an optimal solution for the \mds problem on $G$; notice that by deleting the paths in $S\cap OPT$ we can assume that $S\cap OPT=\emptyset$. This means that every path in $S$ must be adjacent to at least one path in $OPT$.

\begin{lemma}
\label{lem:2crossingMDS}
Any path in $OPT$ is adjacent to at most two distinct paths in $S$.
\end{lemma}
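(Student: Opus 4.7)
The plan is to argue by contradiction: suppose $P^*\in OPT$ were adjacent in $G$ to three distinct paths $P_1\prec P_2\prec P_3$ of $S$, labelled in the order the greedy loop inserts them. I would first establish that $P_1,P_2,P_3$ are pairwise non-adjacent. Indeed, whenever $P_i$ is added to $S$, the algorithm deletes $N[P_i]$ from $\piei$, so any later path $P_j\in S$ must satisfy $P_j\notin N[P_i]$. Thus the three paths form an independent set in $G$ even though they share the common neighbour $P^*$.

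Next I would exploit the fact that every path in $\piei$ is of \dchap-type. The horizontal part of such a path sits at height $y(\cor(\cdot))$ and the vertical part sits at column $x(\cor(\cdot))$, so two such paths can share a horizontal grid edge only if their corners agree in the $y$-coordinate, and a vertical grid edge only if their corners agree in the $x$-coordinate; the weak general-position assumption prevents both simultaneously. Hence, by the partition $\{\hnei(P^*),\vnei(P^*)\}$ of $N[P^*]$ noted in Section~\ref{sec:prelimins}, each $P_i$ belongs to exactly one of the two classes. A pigeonhole on three paths into two classes then yields two of them, say $P_i$ and $P_j$, that lie in the same class; without loss of generality both are in $\hnei(P^*)$, so $y(\cor(P_i))=y(\cor(P_j))=y(\cor(P^*))$.

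The next step is to invoke the double-crossing assumption in order to force $P_i$ and $P_j$ themselves to be adjacent, which will contradict the non-adjacency established earlier. Let $\ell$ be the vertical line $x=a$ of the definition. Since each corner lies strictly in the lower-left quadrant determined by $\ell$ and $L$, we have $x(\cor(P_i)),x(\cor(P_j))\leq a-1$ on the grid; and since each path must meet $\ell$, the horizontal tips satisfy $x(\htip(P_i)),x(\htip(P_j))\geq a$. Consequently both horizontal parts contain the grid nodes at $x=a-1$ and $x=a$ on the common row $y=y(\cor(P^*))$, so they share the unit grid edge between these two nodes. This contradicts $P_i\notin N[P_j]$. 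The symmetric case in which both $P_i$ and $P_j$ lie in $\vnei(P^*)$ is handled identically, using the horizontal line $L$ in place of $\ell$ and vertical parts in place of horizontal parts.

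The main obstacle I anticipate is the final grid-edge step: one must ensure that two paths which both meet the same axis-parallel line actually share a whole unit edge, not merely a single grid node, since only edge-sharing produces adjacency in a \bepg representation. The strict lower-left corner condition combined with the integer grid positions is exactly what promotes a common crossing point into a common unit edge, and I would surface this ingredient early so that the pigeonhole step converts cleanly into a genuine contradiction.
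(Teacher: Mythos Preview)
Your argument is correct and follows essentially the same route as the paper's proof: a pigeonhole on $\{\hnei(P^*),\vnei(P^*)\}$ followed by the observation that two paths in the same class must share a common grid edge adjacent to the relevant crossing line, contradicting the greedy deletion of $N[P_i]$. The only cosmetic difference is that you treat the $\hnei$ case explicitly (using $\ell$) while the paper treats the $\vnei$ case explicitly (using $L$), and you spell out the pairwise non-adjacency of $S$-paths and the node-versus-edge subtlety more carefully than the paper does.
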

\begin{proof}
Suppose for a contradiction that there exists a path $P\in OPT$ that is adjacent to three distinct paths $P_1, P_2$ and $P_3$ of $S$. W.l.o.g., assume that $P_i, P_j\in \vnei(P)$ for some $i<j\in\{1,2,3\}$. This means that $x(\cor(P))=x(\cor(P_i))=x(\cor(P_j))$. Since $i<j$, we have $y(\cor(P_i))<y(\cor(P_j))$. Moreover, since all the paths in \piei intersect the horizontal line $L$, the three paths $P, P_i$ and $P_j$ must all intersect $L$ at the same point and so they share the topmost vertical grid edge below $L$ on which $\vpart(P)$ lies; see the figure on right. Thus, $P_i$ and $P_j$ are adjacent in $G$. Since $y(\cor(P_i))<y(\cor(P_j))$, we have $P_i\prec P_j$ and so $P_j$ is removed from \piei when the algorithm adds $P_i$ to $S$. So, $P_j\notin S$ --- a contradiction.
\end{proof}

Since every path in $S$ must be adjacent to at least one path in $OPT$ and any path in $OPT$ can be adjacent to at most two distinct paths in $S$ by Lemma~\ref{lem:2crossingMDS}, we have $\lvert S\rvert\leq 2\lvert OPT\rvert$. Therefore, we have the following theorem.
\begin{theorem}
\label{thm:2crossingMDS}
There exists a polynomial-time 2-approximation algorithm for the \mds problem on \textsc{Double-Crossing} \bepg graphs.
\end{theorem}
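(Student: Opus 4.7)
The plan is to verify three properties of Algorithm~\ref{alg:approxLineMDS}: feasibility of the output $S$, the $2$-approximation guarantee $\lvert S\rvert\leq 2\lvert OPT\rvert$ for any minimum dominating set $OPT$, and polynomial running time. Feasibility and runtime are short; the content of the theorem lies in the approximation bound, which is driven entirely by Lemma~\ref{lem:2crossingMDS}.

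For feasibility, I would argue that every path $P$ originally in \piei is dominated by $S$ at termination: either $P$ was still present at its own iteration and was added to $S$ directly, or it was removed earlier because some neighbor $P'$ had already been added to $S$ and the step \piei $:=$ \piei $\setminus N[P']$ deleted $P$. In either case $P\in N[S]$, so $S$ is a dominating set of $G$.

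For the approximation bound, I would follow the charging argument sketched just before the theorem. Fix any minimum dominating set $OPT$ of $G$, and observe that by deleting the common paths $S\cap OPT$ from both sides it suffices to treat the case $S\cap OPT=\emptyset$. Under this assumption, every $P\in S$ is dominated by $OPT$ via some strict neighbor $P''\in OPT$, and mapping each such $P$ to one such choice of $P''$ defines a function $\phi\colon S\to OPT$. Since $\phi$ partitions $S$ via its fibers, $\lvert S\rvert=\sum_{P''\in OPT}\lvert\phi^{-1}(P'')\rvert$, and Lemma~\ref{lem:2crossingMDS} bounds each fiber by two, giving $\lvert S\rvert\leq 2\lvert OPT\rvert$ as required.

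For the running time, I would sort \piei\ in $\prec$-order in $O(n\log n)$ time, then perform the $O(n)$ iterations of the main loop, each carrying out a neighborhood deletion in $O(n)$ time, for a total of $O(n^2)$. The genuine obstacle of the whole statement is really Lemma~\ref{lem:2crossingMDS}: three distinct paths of \piei\ sharing a vertical corner coordinate would, under the \textsc{Double-Crossing} hypothesis, all funnel through the same vertical grid edge just below the horizontal line $L$, making two of them pairwise edge-sharing and contradicting the greedy removal rule. Once that geometric lemma is granted, the theorem collapses to the charging argument above.
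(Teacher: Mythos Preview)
Your proposal is correct and follows the paper's argument essentially verbatim: the paper also dispatches feasibility and polynomial running time in a line, reduces to the case $S\cap OPT=\emptyset$, and then pairs the observation that every path of $S$ has a neighbor in $OPT$ with Lemma~\ref{lem:2crossingMDS} to obtain $\lvert S\rvert\le 2\lvert OPT\rvert$. Your explicit charging map $\phi\colon S\to OPT$ is just a slightly more formal restatement of the same double-counting, and your closing sketch of why Lemma~\ref{lem:2crossingMDS} holds matches the paper's proof of that lemma.
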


\begin{algorithm}[t]
\caption{\textsc{ApproximateLineMDS}($G$, $L$, $\ell$)}
\label{alg:approxLineMDS}
\begin{algorithmic}[1]
\State $S\gets \emptyset$;
\For{each path $P\in$\piei in increasing order $\prec$}
	\State $S\gets S\cup P$;
	\State \piei$\gets$\piei$\setminus N[P]$;
\EndFor
\State\Return $S$;
\end{algorithmic}
\end{algorithm}

\paragraph{Vertical-crossing \bepg graphs.}
Here, we are given a \bepg graph $G$ and a vertical line $\ell$ on the grid $\mathcal{G}$ such that $\hpart(P)$ intersects $\ell$ for all $P\in$\piei. Moreover, we make a \emph{non-containment assumption} in the sense that the vertical segment of no path is entirely contained in that of any other path in \piei; that is, for every two paths $P, P'\in$\piei such that $P\in\vnei(P')$, neither $\vpart(P)\subseteq\vpart(P')$ nor $\vpart(P')\subseteq\vpart(P)$. We prove that Algorithm~\ref{alg:approxLineMDS} is a 3-approximation algorithm for the \mds problem on \textsc{Vertical-crossing} \bepg graphs.
\begin{theorem}
\label{thm:approxGPrim}
Algorithm~\ref{alg:approxLineMDS} is a 3-approximation algorithm for the \mds problem on \textsc{Vertical-crossing} \bepg graphs.
\end{theorem}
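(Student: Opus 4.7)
The plan is to imitate the argument of Lemma~\ref{lem:2crossingMDS}, this time targeting the bound $|N[P] \cap S| \le 3$ for every $P \in OPT$. Once this is established, the standard charging argument (each $Q \in S$ is assigned to a dominator in $OPT \cap N[Q]$, and each $P \in OPT$ absorbs at most $|N[P] \cap S|$ charges) immediately yields $|S| \le 3|OPT|$. I would first record that $S$ is an independent dominating set, since each iteration removes the closed neighborhood of the added path; this reduces the analysis to the case $P \in OPT \setminus S$, where $|N[P] \cap S| = |N(P) \cap S|$. By the weak general-position assumption, $N(P) = \hnei(P) \sqcup \vnei(P)$, so it suffices to bound each side within the independent set $S$ separately.

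For $\hnei(P) \cap S$, every $Q \in \hnei(P)$ has $y(\cor(Q)) = y_0 := y(\cor(P))$ and has $\hpart(Q)$ crossing $\ell$. Replaying the Lemma~\ref{lem:2crossingMDS} argument with horizontal and vertical roles swapped and with $\ell$ playing the role of $L$, all such horizontal parts cover a common horizontal grid edge adjacent to the point where $\hpart(P)$ meets $\ell$; any two of them are thereby adjacent in $G$, so $|\hnei(P) \cap S| \le 1$.

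For $\vnei(P) \cap S$, I would lean on the non-containment assumption. Any $Q \in \vnei(P)$ has $x(\cor(Q)) = x_0 := x(\cor(P))$ and shares a vertical grid edge with $\vpart(P)$; by non-containment, exactly one of two cases must hold: Case~A, with $y(\cor(Q)) < y(\cor(P))$ and $y(\vtip(Q)) < y(\vtip(P))$, or Case~B, with $y(\cor(Q)) > y(\cor(P))$ and $y(\vtip(Q)) > y(\vtip(P))$. In Case~A, the edge-sharing requirement forces $y(\vtip(Q)) \ge y_0 + 1$, so $\vpart(Q)$ contains the grid edge from $(x_0, y_0)$ to $(x_0, y_0 + 1)$; any two Case~A paths therefore share this edge and are adjacent. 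A symmetric argument near $\vtip(P)$ shows that any two Case~B paths are also adjacent. Hence $S$ contains at most one Case~A path and one Case~B path, so $|\vnei(P) \cap S| \le 2$ and altogether $|N[P] \cap S| \le 3$.

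The main obstacle I foresee is the vertical case analysis: one has to verify carefully that the non-containment assumption really yields this Case~A\,/\,Case~B dichotomy (no intermediate configuration survives under edge-sharing) and that, within each case, the shared-edge requirement pins down a single common grid edge forcing pairwise adjacency. Once $|N[P] \cap S| \le 3$ is in hand, the charging step is routine and the 3-approximation ratio follows.
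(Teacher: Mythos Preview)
Your proposal is correct and matches the paper's proof essentially step for step: bound $|\hnei(P)\cap S|\le 1$ via the common horizontal grid edge next to $\ell$, and bound $|\vnei(P)\cap S|\le 2$ by splitting on whether $\cor(Q)$ lies below or above $\cor(P)$ and invoking non-containment in the ``above'' case to force sharing of the topmost edge of $\vpart(P)$. The only cosmetic difference is that the paper phrases the vertical bound as a pigeonhole contradiction on three hypothetical paths rather than your direct Case~A/Case~B dichotomy.
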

\begin{proof}
\label{prf:thm:approxGPrim}
Let $G$ be any \textsc{Vertical-Crossing} \bepg graph. Moreover, let $OPT$ be an optimal solution for the \mds problem on $G$ and let $S$ be the solution returned by Algorithm~\ref{alg:approxLineMDS}. Again, we can assume that $S\cap OPT=\emptyset$. Thus, every path in $S$ must be dominated by at least one path in $OPT$. In the following, we show that any path in $OPT$ can dominate at most three distinct paths in $S$ and so prove that $\lvert S\rvert\leq 3\lvert OPT\rvert$. Let $P$ be any path in $OPT$. We show that $\lvert S\cap \hnei(P)\rvert\leq 1$ and $\lvert S\cap \vnei(P)\rvert\leq 2$.

First, suppose for a contradiction that there are two paths $P_1,P_2\in S\cap\hnei(P)$; assume w.l.o.g. that $P_1\prec P_2$. Notice that $y(\cor(P))=y(\cor(P_1))=y(\cor(P_2))$. Thus, since all the three paths $P, P_1$ and $P_2$ intersect the vertical line $\ell$, we conclude that they all intersect $\ell$ at the same point. Therefore, they share the rightmost horizontal grid edge to the left of $\ell$ on which $\hpart(P)$ lies. This means that $P_1$ and $P_2$ are adjacent in $G$. Since $P_1\prec P_2$, Algorithm~\ref{alg:approxLineMDS} removes $P_2$ from \piei when adding $P_1$ into $S$; that is, $P_2\notin S$ --- a contradiction. So, $\lvert S\cap \hnei(P)\rvert\leq 1$.

Now, suppose for a contradiction that there are three paths $P_1, P_2, P_3\in S\cap\vnei(P)$; assume w.l.o.g. that $P_1\prec P_2\prec P_3$. Notice that $x(\cor(P))=x(\cor(P_i))$ for all $1\leq i\leq 3$. Consider $y(\cor(P))$. Then, for at least two paths $P_i, P_j$, where $i<j\in\{1,2,3\}$, we have either $y(\cor(P_i)), y(\cor(P_j))>y(\cor(P))$ or $y(\cor(P_i)), y(\cor(P))<y(\cor(P))$ (equality does not happen due to our weak general-position assumption).
\begin{itemize}
\item If $y(\cor(P_i)), y(\cor(P_j))<y(\cor(P))$, then $\vpart(P_i)$ and $\vpart(P_j)$ both share with $P$ the bottommost vertical grid edge on which $\vpart(P)$ lies, implying that $P_i$ and $P_j$ are adjacent in $G$. Since $i<j$, we have $P_i\prec P_j$ and so Algorithm~\ref{alg:approxLineMDS} removes $P_j$ from $G$ when adding $P_i$ to $S$. So, $P_j\notin S$ --- a contradiction.
\item If $y(\cor(P_i)), y(\cor(P_j))>y(\cor(P))$, then $x(\vtip(P_i))>x(\vtip(P))$ and $x(\vtip(P_j))>x(\vtip(P))$ because otherwise $\vpart(P_i)\subseteq \vpart(P)$ or $\vpart(P_j)\subseteq \vpart(P)$, which is a contradiction to the non-containment assumption of paths in $G$. Since $i<j$, we have $P_i\prec P_j$. Therefore, $P_i$ and $P_j$ share the topmost vertical grid edge on which $\vpart(P)$ lies, meaning that $P_i$ and $P_j$ are adjacent in $G$. So, Algorithm~\ref{alg:approxLineMDS} removes $P_j$ when adding $P_i$ into $S$ --- a contradiction to $P_j\in S$.
\end{itemize}

Therefore, $\lvert S\cap \hnei(P)\rvert\leq 1$ and $\lvert S\cap \vnei(P)\rvert\leq 2$. This completes the proof of the theorem.
\end{proof}

\section{Conclusion}
\label{sec:conclusion}
In this paper, we considered the \mis and \mds problems on \bvpg and \bepg graphs. For \bvpg graphs, we gave an $O(\log n)$-approximation algorithm for the \mis problem, and an $O(1)$-approximation algorithm for the \mds problem when the input graph is one-string. For \bepg graphs, we proved that \mds is \textsc{APX}-hard (even if the graph has only two types of paths), ruling out the existence of a PTAS unless \textsc{P=NP}. We also gave $c$-approximation algorithms for this problem on two subclasses of \bepg graphs, for $c\in\{2, 3\}$. We conclude the paper by the following open problems:
\begin{itemize}
\item Is there an $O(1)$-approximation algorithm for the \mis problem on \bvpg graphs? Also, does the problem admit a PTAS or it is \textsc{APX}-hard?
\item Our $O(1)$-approximation algorithm for \mds on one-string \bvpg graphs relies on the fact that the input graph is one-string (in the proof of Lemma~\ref{lem:oneDmdsVhitting}, in particular); is there an $O(1)$-approximation algorithm for this problem on any \bvpg graph?
\item Although graph $G'$ in our \textsc{APX}-hardness consists of two types of paths, we believe that the problem remains \apx even if only one type of paths are allowed. Is the \mds problem \apx on \bepg graphs, if the graph consists of only one type of paths?
\item Is there an $O(1)$-approximation algorithm for the \mds problem on \bepg graphs?
\item Is the \mds problem \textsc{NP}-hard on \textsc{Vertical-Crossing} \bepg graphs?
\end{itemize}

\bibliographystyle{plain}
\bibliography{ref}

\begin{thebibliography}{10}

\bibitem{AlimontiK00}
Paola Alimonti and Viggo Kann.
\newblock Some {APX}-completeness results for cubic graphs.
\newblock {\em Theor. Comput. Sci.}, 237(1-2):123--134, 2000.

\bibitem{AsinowskiCGLLS12}
Andrei Asinowski, Elad Cohen, Martin~C. Golumbic, Vincent Limouzy, Marina
  Lipshteyn, and Michal Stern.
\newblock Vertex intersection graphs of paths on a grid.
\newblock {\em J. Graph Algorithms Appl.}, 16(2):129--150, 2012.

\bibitem{AsinowskiR12}
Andrei Asinowski and Bernard Ries.
\newblock Some properties of edge intersection graphs of single-bend paths on a
  grid.
\newblock {\em Discrete Mathematics}, 312(2):427--440, 2012.

\bibitem{BiedlCLMMV16}
Therese~C. Biedl, Timothy~M. Chan, Stephanie Lee, Saeed Mehrabi, Fabrizio
  Montecchiani, and Hamideh Vosoughpour.
\newblock On guarding orthogonal polygons with sliding cameras.
\newblock {\em To appear in proceedings of the 11th International Conference
  and Workshops on Algorithms and Computation (WALCOM '17)}, 2017.

\bibitem{BiedlD15}
Therese~C. Biedl and Martin Derka.
\newblock 1-string {B}$_2$-{VPG} representation of planar graphs.
\newblock In {\em proceedings of the 31st International Symposium on
  Computational Geometry (SoCG '15), Eindhoven, The Netherlands}, pages
  141--155, 2015.

\bibitem{BiedlS10}
Therese~C. Biedl and Michal Stern.
\newblock On edge-intersection graphs of $k$-bend paths in grids.
\newblock {\em Discrete Mathematics {\&} Theoretical Computer Science},
  12(1):1--12, 2010.

\bibitem{Bouchet94}
Andr{\'{e}} Bouchet.
\newblock Circle graph obstructions.
\newblock {\em J. Comb. Theory, Ser. {B}}, 60(1):107--144, 1994.

\bibitem{BougeretBGP15}
Marin Bougeret, St{\'{e}}phane Bessy, Daniel Gon{\c{c}}alves, and Christophe
  Paul.
\newblock On independent set on {B}$_1$-{EPG} graphs.
\newblock {\em CoRR}, abs/1510.00598, 2015.

\bibitem{BronnimannG95}
Herv{\'{e}} Br{\"{o}}nnimann and Michael~T. Goodrich.
\newblock Almost optimal set covers in finite {VC}-dimension.
\newblock {\em Discrete {\&} Computational Geometry}, 14(4):463--479, 1995.

\bibitem{ButmanHLR10}
Ayelet Butman, Danny Hermelin, Moshe Lewenstein, and Dror Rawitz.
\newblock Optimization problems in multiple-interval graphs.
\newblock {\em {ACM} Trans. Algorithms}, 6(2):40:1--40:18, 2010.

\bibitem{ChalopinG09}
J{\'{e}}r{\'{e}}mie Chalopin and Daniel Gon{\c{c}}alves.
\newblock Every planar graph is the intersection graph of segments in the
  plane: extended abstract.
\newblock In {\em proceedings of the 41st Annual {ACM} Symposium on Theory of
  Computing (STOC '09), Bethesda, MD, USA}, pages 631--638, 2009.

\bibitem{ChalopinGO10}
J{\'{e}}r{\'{e}}mie Chalopin, Daniel Gon{\c{c}}alves, and Pascal Ochem.
\newblock Planar graphs have 1-string representations.
\newblock {\em Discrete {\&} Computational Geometry}, 43(3):626--647, 2010.

\bibitem{ChaplickCS11}
Steven Chaplick, Elad Cohen, and Juraj Stacho.
\newblock Recognizing some subclasses of vertex intersection graphs of 0-bend
  paths in a grid.
\newblock In {\em proceedings of the 37th International Workshop on
  Graph-Theoretic Concepts in Computer Science (WG '11), Tepl{\'{a}} Monastery,
  Czech Republic}, pages 319--330, 2011.

\bibitem{ChaplickJKV12}
Steven Chaplick, V{\'{\i}}t Jel{\'{\i}}nek, Jan Kratochv{\'{\i}}l, and
  Tom{\'{a}}s Vyskocil.
\newblock Bend-bounded path intersection graphs: Sausages, noodles, and waffles
  on a grill.
\newblock In {\em proceedings of the 38th International Workshop on
  Graph-Theoretic Concepts in Computer Science (WG '12), Jerusalem, Israel},
  pages 274--285, 2012.

\bibitem{ChaplickU13}
Steven Chaplick and Torsten Ueckerdt.
\newblock Planar graphs as {VPG}-graphs.
\newblock {\em J. Graph Algorithms Appl.}, 17(4):475--494, 2013.

\bibitem{ClarksonV07}
Kenneth~L. Clarkson and Kasturi~R. Varadarajan.
\newblock Improved approximation algorithms for geometric set cover.
\newblock {\em Discrete {\&} Computational Geometry}, 37(1):43--58, 2007.

\bibitem{CohenGR14}
Elad Cohen, Martin~C. Golumbic, and Bernard Ries.
\newblock Characterizations of co-graphs as intersection graphs of paths on a
  grid.
\newblock {\em Discrete Applied Mathematics}, 178:46--57, 2014.

\bibitem{clrsBook}
Thomas~H. Cormen, Charles~E. Leiserson, Ronald~L. Rivest, and Clifford Stein.
\newblock {\em Introduction to Algorithms (3rd ed.)}.
\newblock {MIT} Press, 2009.

\bibitem{DamianP06}
Mirela Damian and Sriram~V. Pemmaraju.
\newblock {APX}-hardness of domination problems in circle graphs.
\newblock {\em Inf. Process. Lett.}, 97(6):231--237, 2006.

\bibitem{Damian-IordacheP99a}
Mirela Damian{-}Iordache and Sriram~V. Pemmaraju.
\newblock Constant-factor approximation algorithms for domination problems on
  circle graphs.
\newblock In {\em proceedings of the 10th International Symposium Algorithms
  and Computation (ISAAC '99), Chennai, India}, pages 70--82, 1999.

\bibitem{Damian-IordacheP02}
Mirela Damian{-}Iordache and Sriram~V. Pemmaraju.
\newblock A (2+$\varepsilon$)-approximation scheme for minimum domination on
  circle graphs.
\newblock {\em J. Algorithms}, 42(2):255--276, 2002.

\bibitem{deBergBook}
Mark de~Berg, Otfried Cheong, Marc van Kreveld, and Mark Overmars.
\newblock {\em Computational Geometry: Algorithms and Applications (3rd ed.)}.
\newblock Springer-Verlag, 2008.

\bibitem{EhrlichET76}
Gideon Ehrlich, Shimon Even, and Robert~Endre Tarjan.
\newblock Intersection graphs of curves in the plane.
\newblock {\em J. Comb. Theory, Ser. {B}}, 21(1):8--20, 1976.

\bibitem{EpsteinGM13}
Dror Epstein, Martin~C. Golumbic, and Gila Morgenstern.
\newblock Approximation algorithms for {B}$_1$-{EPG} graphs.
\newblock In {\em proceedings of the 13th International Symposium on Algorithms
  and Data Structures (WADS '13), London, ON, Canada}, pages 328--340, 2013.

\bibitem{FelsnerKMU16}
Stefan Felsner, Kolja~B. Knauer, George~B. Mertzios, and Torsten Ueckerdt.
\newblock Intersection graphs of {L}-shapes and segments in the plane.
\newblock {\em Discrete Applied Mathematics}, 206:48--55, 2016.

\bibitem{FoxP11}
Jacob Fox and J{\'{a}}nos Pach.
\newblock Computing the independence number of intersection graphs.
\newblock In {\em proceedings of the Twenty-Second Annual {ACM-SIAM} Symposium
  on Discrete Algorithms (SODA '11), San Francisco, California, USA}, pages
  1161--1165, 2011.

\bibitem{Golumbic04}
Martin~C. Golumbic.
\newblock {\em Algorithmic Graph Theory and Perfect Graphs (Annals of Discrete
  Mathematics, Vol 57)}.
\newblock North-Holland Publishing Co., Amsterdam, The Netherlands, The
  Netherlands, 2004.

\bibitem{GolumbicLS09}
Martin~C. Golumbic, Marina Lipshteyn, and Michal Stern.
\newblock Edge intersection graphs of single bend paths on a grid.
\newblock {\em Networks}, 54(3):130--138, 2009.

\bibitem{HeldtKU14}
Daniel Heldt, Kolja~B. Knauer, and Torsten Ueckerdt.
\newblock Edge-intersection graphs of grid paths: The bend-number.
\newblock {\em Discrete Applied Mathematics}, 167:144--162, 2014.

\bibitem{HeldtKU14a}
Daniel Heldt, Kolja~B. Knauer, and Torsten Ueckerdt.
\newblock On the bend-number of planar and outerplanar graphs.
\newblock {\em Discrete Applied Mathematics}, 179:109--119, 2014.

\bibitem{KatzMN05}
Matthew~J. Katz, Joseph S.~B. Mitchell, and Yuval Nir.
\newblock Orthogonal segment stabbing.
\newblock {\em Comput. Geom.}, 30(2):197--205, 2005.

\bibitem{KeilMPV15}
Mark Keil, Joseph S.~B. Mitchell, Dinabandhu Pradhan, and Martin Vatshelle.
\newblock An algorithm for the maximum weight independent set problem on
  outersting graphs.
\newblock In {\em proceedings of the 27th Canadian Conference on Computational
  Geometry (CCCG '15), Kingston, Ontario, Canada}, 2015.

\bibitem{KohlerM16}
Ekkehard K{\"{o}}hler and Lalla Mouatadid.
\newblock A linear time algorithm to compute a maximum weighted independent set
  on co-comparability graphs.
\newblock {\em Inf. Process. Lett.}, 116(6):391--395, 2016.

\bibitem{Kratochvil91}
Jan Kratochv{\'{\i}}l.
\newblock String graphs {I}: the number of critical non-string graphs is
  infinite.
\newblock {\em J. Comb. Theory, Ser. {B}}, 52(1):53--66, 1991.

\bibitem{Kratochvil91a}
Jan Kratochv{\'{\i}}l.
\newblock String graphs {II}: recognizing string graphs is {NP}-hard.
\newblock {\em J. Comb. Theory, Ser. {B}}, 52(1):67--78, 1991.

\bibitem{LahiriMS15}
Abhiruk Lahiri, Joydeep Mukherjee, and C.~R. Subramanian.
\newblock Maximum independent set on {B}$_1$-{VPG} graphs.
\newblock In {\em proceedings of the 9th International Conference on
  Combinatorial Optimization and Applications (COCOA '15), Houston, TX, USA},
  pages 633--646, 2015.

\bibitem{MiddendorfP92}
Matthias Middendorf and Frank Pfeiffer.
\newblock The max clique problem in classes of string-graphs.
\newblock {\em Discrete Mathematics}, 108(1-3):365--372, 1992.

\bibitem{NashG10}
Nicholas Nash and David Gregg.
\newblock An output sensitive algorithm for computing a maximum independent set
  of a circle graph.
\newblock {\em Inf. Process. Lett.}, 110(16):630--634, 2010.

\bibitem{PapadimitriouY91}
Christos~H. Papadimitriou and Mihalis Yannakakis.
\newblock Optimization, approximation, and complexity classes.
\newblock {\em J. Comput. Syst. Sci.}, 43(3):425--440, 1991.

\bibitem{PergelR16}
Martin Pergel and Pawe{\l} Rz{\k{a}}{\.{z}}ewski.
\newblock On edge intersection graphs of paths with 2 bends.
\newblock In {\em proceedings of the 42nd International Workshop on
  Graph-Theoretic Concepts in Computer Science (WG '16), Istanbul, Turkey, June
  22-24, 2016.}, pages 207--219, 2016.

\bibitem{thesisEdwardScheinerman}
Edward~R. Scheinerman.
\newblock {\em Intersection Classes and Multiple Intersection Parameters of
  Graphs}.
\newblock PhD thesis, Princeton University, 1984.

\bibitem{graphTheoryBook}
Douglas~B. West.
\newblock {\em Introduction to Graph Theory (2nd ed.)}.
\newblock Prentice Hall, 2001.

\end{thebibliography}

\end{document}